\setlist{itemsep=0ex}
\setlist[itemize]{label={\small$\bullet$}}
\newcommand\itemvar[1]{%
 \item[\arabic{enumi}.#1]\def\@currentlabel{\arabic{enumi}.#1}}
\spnewtheorem{thm}[theorem]{Theorem}{\bfseries}{\itshape}
\spnewtheorem{cor}[theorem]{Corollary}{\bfseries}{\itshape}
\spnewtheorem{cnj}[theorem]{Conjecture}{\bfseries}{\itshape}
\spnewtheorem{lem}[theorem]{Lemma}{\bfseries}{\itshape}
\spnewtheorem{lemdefn}[theorem]{Lemma and Definition}{\bfseries}{\itshape}
\spnewtheorem{prop}[theorem]{Proposition}{\bfseries}{\itshape}
\spnewtheorem{defn}[theorem]{Definition}{\bfseries}{\upshape}
\spnewtheorem{rem}[theorem]{Remark}{\bfseries}{\upshape}
\spnewtheorem{notation}[theorem]{Notation}{\bfseries}{\upshape}
\spnewtheorem{expl}[theorem]{Example}{\bfseries}{\upshape}
\spnewtheorem{thmdefn}[theorem]{Theorem and Definition}{\bfseries}{\itshape}
\spnewtheorem{propdefn}[theorem]{Proposition and Definition}{\bfseries}{\itshape}
\spnewtheorem{assumption}[theorem]{Assumption}{\bfseries}{\upshape}
\spnewtheorem{algorithm}[theorem]{Algorithm}{\bfseries}{\upshape}
 \renewenvironment{theorem}{\begin{thm}}{\end{thm}}
 \renewenvironment{lemma}{\begin{lem}}{\end{lem}}
 \renewenvironment{proposition}{\begin{prop}}{\end{prop}}
 \renewenvironment{definition}{\begin{defn}}{\end{defn}}
 \renewenvironment{remark}{\begin{rem}}{\end{rem}}
 \renewenvironment{example}{\begin{expl}}{\end{expl}}
\newcommand\Label[1]{&\refstepcounter{equation}(\theequation)\ltx@label{#1}&}
\newcommand{\klstar}{\sharp}  			
\newcommand{\istar}{\dagger}  			
\newcommand{\iistar}{\ddagger}  			
\newcommand{\rstar}{\ast}  			
\providecommand{\mto}{\mapsto}
\renewcommand{\comp}{\,}
\newcommand{\komp}{\cdot}
\providecommand{\ins}{\oname{in}}
\providecommand{\out}{\oname{out}}
\providecommand{\tuo}{\oname{out}^\mone}
\newcommand{\nat}{\mathbb{N}}
\providecommand{\while}[2]{\mfix{while\mkern3mu{}}{#1}{\mkern1mu{}}{\mathbin{}#2}}
\providecommand{\dw}[2]{\mfix{do\mkern3mu{}}{\mathbin{}#1}{\mkern3mu{}while\mkern3mu{}}{\mathbin{}#2}}
\renewcommand{\ift}[3]{\mfix{if\mkern2mu{}}{\mathbin{}#1}{\mkern3mu{}then\mkern3mu{}}{\mathbin{} #2}{\mkern3mu else\mkern3mu{}}{#3}}
\newcommand{\Dec}{\BC_\BBT^{\mathsf d}}
\providecommand{\mplus}{{\scriptscriptstyle\bf+}} 	       
\newcommand{\axname}[1]{\text{\upshape\bfseries\textsf{#1}}\xspace}
\newcommand{\FIX}{\hyperref[it:fix]{\axname{\normalcolor Fixpoint}}\xspace}
\newcommand{\NAT}{\hyperref[it:nat]{\axname{\normalcolor Naturality}}\xspace}
\newcommand{\UNI}{\hyperref[it:uni]{\axname{\normalcolor Uniformity}}\xspace}
\newcommand{\COD}{\hyperref[it:cod]{\axname{\normalcolor Codiagonal}}\xspace}
\newcommand{\DIN}{\hyperref[rem:din]{\axname{\normalcolor Dinaturality}}\xspace}
\newcommand{\SQR}{\hyperref[it:sqr]{\axname{\normalcolor Squaring}}\xspace}
\newcommand{\SUN}{\hyperref[it:suni]{\axname{\normalcolor Uniformity$^\bigstar$}}\xspace}
\newcommand{\WFIX}{\hyperref[it:wfix]{\axname{\normalcolor W-Fix}}\xspace}
\newcommand{\WAND}{\hyperref[it:wand]{\axname{\normalcolor W-And}}\xspace}
\newcommand{\WUNI}{\hyperref[it:wuni]{\axname{\normalcolor W-Uni}}\xspace}
\newcommand{\DWFIX}{\hyperref[it:dwfix]{\axname{\normalcolor DW-Fix}}\xspace}
\newcommand{\DWAND}{\hyperref[it:dwand]{\axname{\normalcolor DW-And}}\xspace}
\newcommand{\DWOR}{ \hyperref[it:dwor]{\axname{\normalcolor  DW-Or}}\xspace}
\newcommand{\DWUNI}{\hyperref[it:dwuni]{\axname{\normalcolor DW-Uni}}\xspace}
\newcommand{\boolnot}{\mathrel{\texttt{\upshape{\raisebox{0.5ex}{\texttildelow}}}}\kern-1.5pt}
\newcommand{\boolor}{\mathrel{\texttt{\upshape|\kern-2.5pt|}}}   
\newcommand{\booland}{\mathrel{\texttt{\upshape\&\kern-.5pt\&}}} 
\renewcommand{\tt}{\oname{tt}}
\newcommand{\ff}{\oname{ff}}
\let\doendproof\endproof
\renewcommand\endproof{~\hfill$\qed$\doendproof}
\newcommand*\linenomathpatch[1]{%
  \cspreto{#1}{\linenomath}%
  \cspreto{#1*}{\linenomath}%
  \csappto{end#1}{\endlinenomath}%
  \csappto{end#1*}{\endlinenomath}%
}
\let\cedilla\c
\renewcommand{\c}{\colon}
\newcommand{\lrule}[3]{\textbf{#1}~~\frac{#2}{#3}}
\newcommand{\itref}[1]{{\upshape\ref{#1}}}
\let\div\undefined
\newcommand{\div}{\delta}
\tikzstyle{shiftarr} = [
\tikzset{
      commutative diagrams/.cd
    , arrow style=tikz
    , diagrams={>=stealth}
    , row sep=large
    , column sep = huge
}
\tikzset{
    cong/.style={draw=none,edge node={node [sloped, allow upside down, auto=false]{$\cong$}}}
  , iso/.style={draw=none,every to/.append style={edge node={node [sloped, allow upside down, auto=false]{$\cong$}}}}
}
\tikzset{%
  show curve controls/.style={
    postaction={
      decoration={
        show path construction,
        curveto code={
          \draw [blue]
            (\tikzinputsegmentfirst) -- (\tikzinputsegmentsupporta)
            (\tikzinputsegmentlast) -- (\tikzinputsegmentsupportb);
          \fill [red, opacity=0.5]
            (\tikzinputsegmentsupporta) circle [radius=.5ex]
            (\tikzinputsegmentsupportb) circle [radius=.5ex];
        }
      },
      decorate
}}}
    \pgfpathrectanglecorners{\southwest}{\northeast}
  \savedmacro\getdboxparameters{%
    \pgfmathsetlengthmacro\pgf@right@top@offset{\pgfkeysvalueof{/tikz/right top offset}}%
    \addtosavedmacro\pgf@right@top@offset%
    \pgfmathsetlengthmacro\pgf@right@mid@offset{\pgfkeysvalueof{/tikz/right mid offset}}%
    \addtosavedmacro\pgf@right@mid@offset%
    \pgfmathsetlengthmacro\pgf@left@bot@offset{\pgfkeysvalueof{/tikz/left bot offset}}%
    \addtosavedmacro\pgf@left@bot@offset%
    \pgfmathsetlengthmacro\pgf@left@mid@offset{\pgfkeysvalueof{/tikz/left mid offset}}%
    \addtosavedmacro\pgf@left@mid@offset%
  }%
    \pgf@sh@lib@gbox@geastanchor{\pgf@right@top@offset}{\pgf@right@mid@offset}
    \pgf@sh@lib@gbox@gwestanchor{\pgf@left@bot@offset}{\pgf@left@mid@offset}
    \def\pgf@left@bot@offset{\pgfkeysvalueof{/tikz/left bot offset}}%
    \def\pgf@right@top@offset{\pgfkeysvalueof{/tikz/right top offset}}%
    \def\pgf@left@mid@offset{\pgfkeysvalueof{/tikz/left mid offset}}%
    \def\pgf@right@mid@offset{\pgfkeysvalueof{/tikz/right mid offset}}%
    \def\pgf@bar@thickness{\pgfkeysvalueof{/tikz/bar thickness}}%
    \pgfpathrectanglecorners{\southwest}{\northeast}%
\pgfutil@empty\pgfsetfillcolor{black}\else
   \pgfpathrectanglecorners{\southwest}{\northeast}%
\def\pgf@sh@lib@gbox@gwestanchor#1#2{%
  \southwest%
  \pgf@ya=\pgf@y
  \northeast
  \pgf@yb=.5\pgf@y
  \advance\pgf@yb by .5\pgf@ya
  \advance\pgf@ya by #1
  \advance\pgf@yb by #2
  \southwest%
  \pgf@y=.5\pgf@ya
  \advance\pgf@y by .5\pgf@yb
}
\def\pgf@sh@lib@gbox@geastanchor#1#2{
  \northeast
  \pgf@ya=\pgf@y
  \southwest
  \pgf@yb=.5\pgf@y
  \advance\pgf@yb by .5\pgf@ya
  \advance\pgf@ya by #1
  \advance\pgf@yb by  #2
  \northeast%
  \pgf@y=.5\pgf@ya
  \advance\pgf@y by .5\pgf@yb
}
  \savedmacro\box@eastports{%
    \def\box@eastports{\pgfkeysvalueof{/pgf/east ports},0}
  }
  \savedmacro\box@westports{%
    \def\box@westports{\pgfkeysvalueof{/pgf/west ports},0}
  }
    \pgfpathrectanglecorners{\southwest}{\northeast}%
  \pgfutil@g@addto@macro\pgf@sh@s@box{%
    \pgfmathparse{dim(\box@eastports)}\pgfmathresult
    \let\portnum=\pgfmathresult

    \pgfmathloop%
    \ifnum\pgfmathcounter=\portnum\relax%
    \else%
      \pgfutil@ifundefined{pgf@anchor@box@o\pgfmathcounter}{%
        \expandafter\xdef\csname pgf@anchor@box@o\pgfmathcounter\endcsname{%
          \noexpand\pgf@sh@lib@box@eastanchor{\pgfmathcounter}%
        }%
      }{}%
    \repeatpgfmathloop%

    \pgfmathparse{dim({\box@westports})}\pgfmathresult
    \let\portnum=\pgfmathresult

    \pgfmathloop%
    \ifnum\pgfmathcounter=\portnum\relax%
    \else%
      \pgfutil@ifundefined{pgf@anchor@box@i\pgfmathcounter}{%
        \expandafter\xdef\csname pgf@anchor@box@i\pgfmathcounter\endcsname{%
          \noexpand\pgf@sh@lib@box@westanchor{\pgfmathcounter}%
        }%
      }{}%
    \repeatpgfmathloop%
  }
\def\pgf@sh@lib@box@eastanchor#1{%
  \pgfmathparse{dim(\box@eastports)}\pgfmathresult
  \let\portnum=\pgfmathresult
  \ifnum#1=\portnum\relax%
    \pgfpointorigin%
  \else
    \pgfmathparse{{\box@eastports}[#1-1]}%
    \let\z\pgfmathresult
    \southwest
    \pgf@ya=\pgf@y%
    \advance\pgf@ya by -\z\pgf@y
    \northeast
    \advance\pgf@ya by \z\pgf@y
    \pgf@y=\pgf@ya%
  \fi
}
\def\pgf@sh@lib@box@westanchor#1{%
  \pgfmathparse{dim(\box@westports)}\pgfmathresult
  \let\portnum=\pgfmathresult
  \ifnum#1>\portnum\relax%
    \pgfpointorigin%
  \else
    \pgfmathparse{{\box@westports}[#1-1]}%
    \let\z\pgfmathresult
    \northeast
    \pgf@ya=\z\pgf@y%
    \southwest
    \advance\pgf@y by -\z\pgf@y
    \advance\pgf@y by \pgf@ya
  \fi
}
\tikzset{
  left bot offset/.initial=.05cm,
  left mid offset/.initial=.0cm,
  right mid offset/.initial=-.05cm,
  right top offset/.initial=-.0cm,
  bar thickness/.initial=.15cm,
}
\tikzset{
  guarded box/.style={
    draw,
    thick,
    text width=2em,
    minimum height=1.5em,
    align=center,
    font=\itshape,
    gbox,
    left bot offset=.05cm,
    left mid offset=-.05cm,
    right mid offset=-.05cm,
    right top offset=.05cm,
    append after command={[#1]
      let \p1=(\tikzlastnode.gwest), 
          \p2=(\tikzlastnode.geast),
          \p3=(\tikzlastnode.east),
          \p4=(\tikzlastnode.north) in
      node[gbar, 
        left, 
        text width=0em,
        inner sep=0pt,
        minimum width/.expanded=\pgfkeysvalueof{/tikz/bar thickness},
        minimum height=\y4-\y3-\pgfkeysvalueof{/tikz/left bot offset}-\pgfkeysvalueof{/tikz/left mid offset},
        at=(\p1)] {} 
      node[gbar, 
        right,
        text width=0em, 
        inner sep=0pt, 
        minimum width/.expanded=\pgfkeysvalueof{/tikz/bar thickness},
        minimum height=\y4-\y3+\pgfkeysvalueof{/tikz/right top offset}+\pgfkeysvalueof{/tikz/right mid offset}, 
        at=(\p2)] {} 
    }
  },
  hide bars/.code=\hide@barstrue,
  guarded box/.default=0
}
\newif\ifhide@bars
\newif\ifgbar@left
\tikzset{
  gbar/.style={
    draw,
    gprof,
    minimum size=.35cm,
    left/.code=\gbar@lefttrue,
    right/.code=\gbar@leftfalse,
  },
  gprof/.default=0
}
\tikzset{
  unguarded box/.style={
    draw,
    thick,
    text width=2em,
    minimum height=1.5em,
    align=center,
    box,
  },
}
\tikzset{
  annotation/.style={
    font=\footnotesize,
    draw,
    solid,
    circle,
    very thin,
    inner sep=.3pt
  }
}
\tikzset{
  sigma line/.style={draw, smooth} 
}
\newcommand\currentcoordinate{\the\tikz@lastxsaved,\the\tikz@lastysaved}
\tikzset{
  left pins/.code args={#1}{
       \foreach \i in {#1}
          \draw let \p1 = \i in \i to ([xshift=-\pgfkeysvalueof{/tikz/left pin offset}]\x1,\y1);
  },
  right pins/.code args={#1}{
       \foreach \i in {#1}
          \draw let \p1 = \i in \i to ([xshift=\pgfkeysvalueof{/tikz/right pin offset}]\x1,\y1);
  }
}
\tikzset{
  lpinch/.initial=.5,
  rpinch/.initial={\pgfkeysvalueof{/tikz/lpinch}},
  xwobble/.initial=.5,
  sigma line/.style={
    to path={
      let \n1 = {\pgfkeysvalueof{/tikz/lpinch}}, 
          \n2 = {\pgfkeysvalueof{/tikz/rpinch}} in
      .. controls ($ (\tikztostart -| \tikztotarget) !\n1! (\tikztostart) $)
              and ($ (\tikztotarget -| \tikztostart) !\n2! (\tikztotarget) $) 
      .. (\tikztotarget) \tikztonodes
    }
  },
  ystretch/.initial=20,
  prat/.initial=.5,
  trace line/.style={
    to path={
      let \p1 = (\tikztostart),
          \p2 = (\tikztotarget),
          \n1 = {max(\y1,\y2)},
          \n2 = {\pgfkeysvalueof{/tikz/ystretch}},
          \n3 = {(\n1-\y1)},
          \n4 = {(\n1-\y2)},
          \n5 = {\pgfkeysvalueof{/tikz/prat}}, 
      in
         .. controls ++(5:\n2) and ++(-5:1.5*\n2) 
         .. ($(\x1,\n1) + (-.5*\n2,\n2)$) 
         .. controls ++(175:1*\n2) and ++(5:{.5*(\n2+\y1-\y2)}) 
         .. ($(\x2,\n1) + ({\n5*(\n2+\y1-\y2)}, \n2)$) 
         .. controls ++(5:{-1.5*\n2-.7*(\y1-\y2)}) and ++(-185:{\n2+.2*(\y1-\y2)}) 
         .. (\tikztotarget) \tikztonodes
    }
  },
  sigma line/.default=.5cm
}
\tikzset{
  cprop/.style={scale=0.8, inner ysep=0pt,every node/.style={scale=0.8,line width=.6pt}},
  guarded box/.append style={draw, line width=.6pt},
  unguarded box/.append style={draw, line width=.6pt}
}
\tikzset{
  bullet/.style={
      circle
    , fill
    , minimum size=4pt
    , inner sep=0pt
    , outer sep=0pt
  }
  , nodes=bullet
}
\title{Shades of Iteration: from Elgot to Kleene}
\author{Sergey Goncharov}
\institute{Friedrich-Alexander-Universit\"at Erlangen-N{\"u}rnberg, Germany}
\begin{document}

\maketitle

\begin{abstract}
Notions of iteration range from the arguably most general \emph{Elgot 
iteration} to a very specific \emph{Kleene iteration}. The fundamental nature
of Elgot iteration has been extensively explored by Bloom and Esik in the form
of \emph{iteration theories}, while \emph{Kleene iteration} became extremely 
popular as an integral part of (untyped) formalisms, such as automata theory, 
regular expressions and Kleene algebra. Here, we establish a formal connection 
between Elgot iteration and Kleene iteration in the form of Elgot monads and Kleene 
monads, respectively. We also introduce a novel class of \emph{while-monads}, 
which like Kleene monads admit a relatively simple description in algebraic 
terms. Like Elgot monads, while-monads cover a large variety of models that meaningfully 
support while-loops, but may fail the Kleene algebra laws, or even fail to support 
a Kleen iteration operator altogether.
\end{abstract}

\section{Introduction}\label{sec:intro}
Iteration is fundamental in many areas of computer science, such as semantics, 
verification, theorem proving, automata theory, formal languages, computability theory,
compiler optimisation, etc. An early effort to identifying a generic notion of iteration
is due to Elgot~\cite{Elgot75}, who proposed to consider an \emph{algebraic theory} induced
by a notion of abstract machine (motivated by \emph{Turing machines}, and their 
variants) and regard iteration as an operator over this algebraic theory.

Roughly speaking, an algebraic theory carries composable spaces of 
morphisms~$L(n,m)$, indexed by natural numbers $n$ and $m$ and including  
all functions from $n$ to $m$\footnote{Here we 
identify numbers $n\in\nat$ with finite ordinals $\{0,\ldots,n-1\}$.}, called \emph{base morphisms}. 
For example, following Elgot, one can consider as $L(n,m)$ the 
space of all functions $n\times S\to m\times S$ representing transitions from a 
machine state ranging over $n$ to a machine state ranging over~$m$, and updating the background store
over $S$ (e.g.\ with $S$ being the Turing machine's tape) in the meanwhile. In modern speech, $L(n,m)$ is 
essentially the space of Kleisli morphisms $n\to Tm$ of the \emph{state monad} $T=(\argument\times S)^S$.
Then a machine over $m$ halting states and $n$ non-halting states is represented 
by a morphism in $L(n,{m+n})$, and the iteration operator is meant to compute
a morphism in~$L(n,m)$, representing a run of the machine, obtained by feedbacking 
all non-halting states. This perspective has been extensively 
elaborated by Bloom and Esik~\cite{BloomEsik93} who identified the ultimate equational 
theory of Elgot iteration together with plenty other examples of algebraic theories
induced by existing semantic models, for which the theory turned out to be sound 
and complete. 

By replacing natural numbers with arbitrary objects of a category with finite
coproducts and by moving from purely equational to a closely related and
practically appealing quasi-equational theory of iteration, one arrives at
\emph{(complete) Elgot monads}~\cite{AdamekMiliusEtAl10,GoncharovSchroderEtAl18},
which are monads $T$, equipped with an iteration operator
\begin{align}\label{eq:elg-iter}\tag{$\istar$}
\lrule{}{f\c X\to T(Y+X)}{f^\istar\c X\to TY}
\end{align} 
In view of the connection between computational effects and monads, pioneered by 
Moggi~\cite{Moggi91}, Elgot monads provide arguably the most general model of 
iteration w.r.t.\ functions carrying computational effects, such as mutable store,
non-determinism, probability, exceptional and abnormal termination, input-output
actions of process algebra. The standard way of semantics via \emph{domain theory} 
yields a general (least) fixpoint operator, which sidelines Elgot iteration and 
overshadows its fundamental role. This role becomes material again when it comes to 
the cases when the standard scenario cannot be applied or is difficult to 
apply, e.g.\ in constructive setting~\cite{Goncharov21}, for deterministic hybrid system 
semantics~\cite{GoncharovJakobEtAl18}, and infinite trace semantics~\cite{LevyGoncharov19}.

In contrast to Elgot iteration, \emph{Kleene iteration}, manifested by \emph{Kleene 
algebra}, is rooted in logic and automata theory~\cite{Kleene56}, and crucially relies on 
non-determinism. The laws of Kleene algebra are from the outset determined by a rather 
conservative observation model, describing discrete events, coming one after another
in linear order and in finite quantities. Nevertheless, 
Kleene algebra and thus Kleene iteration proved to be extremely successful (especially 
after the celebrated complete algebraic axiomatization of Kleene algebra by Kozen~\cite{Kozen94}) 
and have been accommodated in various formalizations and verification frameworks from those 
for concurrency~\cite{HoareMollerEtAl09} to those for modelling hybrid 
systems~\cite{Platzer08}. A significant competitive advantage of Kleene iteration is 
that it needs no (even very rudimental) type grammar for governing well-definedness 
of syntactic constructs, although this cannot be avoided when extending Kleene 
algebra with standard programming features~\cite{Kozen02,Aboul-HosnKozen06,KozenMamouras13}.
Semantically, just as Elgot iteration, Kleene iteration can be reconciled with 
computational effects, leading to \emph{Kleene monads}~\cite{Goncharov10}, which 
postulate Kleene iteration with the type profile:
\begin{align}\label{eq:kle-iter}\tag{$\rstar$}
\lrule{}{f\c X\to TX}{f^\rstar\c X\to TX}
\end{align} 
Given $f$, $f^\rstar$ self-composes it non-deterministically indefinitely many times. 
In contrast to Elgot monads, the stock of computational effects modelled by 
Kleene monads is rather limited, which is due to the fact that many computational 
effects are subject to laws, which contradict the Kleene algebra laws. For a simple 
example, consider the computational effect of exception raising, constrained by the 
law, stating that postcomposing an exception raising program by another program
is ineffective. Together with the Kleene algebra laws, we obtain a havoc:
\begin{align*}
\oname{raise} e_1 = \oname{raise} e_1;\ \bot = \bot = \oname{raise} e_2;\ \bot = \oname{raise} e_2,
\end{align*}
where $\bot$ is the unit of non-deterministic choice.
This and similar issues led to a number of proposals to weaken Kleene 
algebra laws~\cite{FokkinkZantema94,Moller07,McIverRabehajaEtAl11,GomesMadeiraEtAl17} 
(potentially leading to other classes of monads, somewhere between Elgot and Kleene), although 
not attempting to identify the weakest set of such laws from the foundational
perspective. At the same time, it seems undebatable that Kleene iteration and 
the Kleene algebra laws yield the most restricted notion of iteration. 

We thus obtain a spectrum of potential notions of iteration between Elgot monads and Kleene monads.
The goal of the present work is, on the one hand to explore this spectrum, 
and on the other hand to contribute into closing the conceptual gap between Kleene iteration
and Elgot iteration. To that end, we introduce \emph{while-monads}, which capture
iteration in the conventional form of while-loops. Somewhat surprisingly, despite 
extensive work on axiomatizing iteration in terms of~\eqref{eq:elg-iter}, a corresponding
generic axiomatization in terms of ``while'' did not seem to be available.
We highlight the following main technical contributions of the present work:
\begin{itemize}
  \item We provide a novel axiomatization of Kleene algebra laws, which is 
  effective both for Kleene algebras and Kleene monads (\autoref{pro:klee-eq});
  \item We show that the existing axiomatization of Elgot monads is minimal (\autoref{pro:mini});
  \item We establish a connection between Elgot monads and while-monads (\autoref{thm:main});
  \item We render Kleene monads as Elgot monads with additional properties (\autoref{thm:elg-klee}).
\end{itemize}
%



\section{Preliminaries}\label{sec:prelim}
We rely on rudimentary notions and facts of category theory, as used in semantics, most
notably \emph{monads}~\cite{Awodey10}. 
For a (locally small) category $\BC$ we denote by~$|\BC|$ the class of its objects and 
by~$\BC(X,Y)$ the set of morphisms from $X\in |\BC|$ to~$Y\in|\BC|$. We often omit
indices at components of natural transformations to avoid clutter.~$\Set$ will 
denote the category of classical sets and functions, i.e.\ sets and functions 
formalized in a classical logic with the law of excluded middle (we will make no use of the axiom of choice).
By $\brks{f,g}\c X\to Y\times Z$ we will denote the \emph{pairing} of two morphisms $f\c X\to Y$
and $g\c X\to Z$ (in a category with binary products), and dually, by $[f,g]\c X+Y\to Z$
we will denote the \emph{copairing} of $f\c X\to Z$ and $g\c Y\to Z$ (in a category with
binary coproducts). By $\bang\c X\to 1$ we will denote terminal morphisms (if $1$ is 
an terminal object).  

An ($F$-)algebra for an endofunctor $F\c\BC\to\BC$ is a pair $(A,a\c FA\to A)$. 
Algebras form a category under the following notion of morphism: $f\c A\to B$
if a morphism from $(A,a)$ to $(B,b)$ if $b f = (Ff)\comp a$. The \emph{initial algebra}
is an initial object of this category (which may or may not exit). We denote this 
object~$(\mu F,\ins)$. ($F$-)coalgebras are defined dually as pairs of the form 
$(A\comma a\c {A\to FA})$. The final coalgebra will be denoted~$(\nu F\comma\out)$.
By Lambek's Lemma~\cite{Lambek68}, both $\ins$ and $\out$ are isomorphisms, and we 
commonly make use of their inverses $\ins^\mone$ and $\tuo$. 

\section{Monads for Computation}\label{sec:mona}
We work with monads represented by \emph{Kleisli triples} $(T,\eta,(\argument)^\klstar)$ where~$T$ is
a map~$|\BC|\to |\BC|$, $\eta$ is the family $(\eta_X\c X\to TX)_{X\in |\BC|}$ and 
$(\argument)^\klstar$ sends $f\c X\to TY$ to $f^\klstar\c TX\to TY$ in such a way that
the standard \emph{monad laws} 
\begin{align*} 
\eta^{\klstar}=\id, && 
f^{\klstar}\eta=f,  && 
(f^{\klstar} g)^{\klstar}=f^{\klstar}g^{\klstar} 
\end{align*}  
hold true. It is then provable that $T$ extends to a 
functor with $Tf = (\eta f)^\klstar$ and~$\eta$ to a \emph{unit} natural transformation.
Additionally, we can define the \emph{multiplication} natural transformation $\mu\c TT\to T$
with $\mu_X = \id^\klstar$
(thus extending $T$ to a monoid in the category of endufunctors).  
We preferably use bold letters, e.g.~$\BBT$, for monads, to contrast with 
the underlying functor $T$. The axioms of monads entail that the morphisms of the 
form $X\to TY$ determine a category, called \emph{Kleisli category}, and denoted $\BC_{\BBT}$,
under \emph{Kleisli composition} $f\komp g = f^\klstar\comp g$ with~$\eta$ as the 
identity morphism. Intuitively, Kleisli category is the category of (generalized)
effectful programs w.r.t.\ $\BC$ as the category of ``pure'', or effectless, programs.
More precisely, we will call \emph{pure} those morphisms in $\BC_{\BBT}$ that are of 
the form~$\eta\comp f$.
We thus use diagrammatic composition $g;f$ alongside and equivalently to functional 
composition $f\komp g$, as the former fits with the sequential composition operators of 
traditional programming languages. 

A monad $\BBT$ is \emph{strong} if it comes with 
a natural transformation $\tau_{X,Y}\c X\times TY\to T(X\times Y)$ called \emph{strength} 
and satisfying a number of coherence conditions~\cite{Moggi91}. Any monad on $\Set$
is canonically strong~\cite{Kock72}.

\begin{example}[Monads]\label{exa:mon}
Recall some computationally relevant monads on~$\Set$ 
(all monads on $\Set$ are strong \cite{Moggi91}).
\begin{enumerate}[wide]
\item\label{it:may} \emph{Maybe-monad}: $TX = X + 1$, $\eta(x) = \inl x$, $f^\klstar(\inl x) = f(x)$,
${f^\klstar(\inr\star) = \inr\star}$. 
\item\label{it:pow}\emph{Powerset monad}: $TX = \PSet X$, $\eta(x) = \{x\}$, $f^\klstar(S\subseteq X) = \{y\in f(x)\mid x\in S\}$.
\item\label{it:plo} $TX = \{S\mid S\in\PSet^\mplus (X + 1), \text{if $S$ is infinite then~} \inr\star\in S\}$ where $\PSet^\mplus$ is the non-empty powerset functor, 
  ${\eta(x) = \{\inl x\}}$, $f^\klstar(S\subseteq X) = \{y\in f(x)\mid{\inl x\in S}\}\cup(\{\inr\star\}\cap S)$.
\item\label{it:exc}\emph{Exception monad}: $TX=X+E$ where $E$ is a fixed (unstructured) non-empty 
set of exceptions, $\eta(x) = \inl x$, $f^\klstar(\inl x) = f(x)$, ${f^\klstar(\inr e) = \inr e}$.
\item\label{it:wri}\emph{Non-deterministic writer monad}: $TX = \PSet(M\times X)$ 
where~$(M\comma\eps\comma\bullet)$ is any monoid, $\eta(x) = \{(e,x)\}$,
$f^\klstar(S\subseteq M\times X) = \{(n\bullet m,y) \mid (m,x)\in S\comma
(n,y)\in f(x)\}$.
\item\label{it:dis}\emph{Discrete sub-distribution monad}: $TX = {\{d\c
  [0,1]\to X\mid\sum_{x\in X} d(x)\leq 1\}}$ (the \emph{supports} of $d$,
  $\{x\in X\mid d(x)>0\}$ are necessarily countable -- otherwise the sum~$\sum_{x\in X} d(x)$ would
  diverge), $\eta(x)$ is the \emph{Dirac distribution} $\delta_x$, centred
  in~$x$, i.e.\ $\delta_x(y) = 1$ if $x=y$, $\delta_x(y) = 0$ otherwise, $({f\c
  X\to\mathcal{D} Y)^\klstar}(d)(y)={\sum_{x\in X} f(x)(y)\cdot d(x)}$.
\item
\emph{Partial state monad:}\label{it:sta}
  $TX=(X\times S+1)^S$, where $S$ is a fixed set of global states, $\eta(x)(s) = \inl(x,s)$,
  $f^\klstar(g\c S\to X\times S+1)(s) = \inr\star$ if $g(s) = \inr\star$
  and $f^\klstar(g\c S\to Y\times S+1)(s) = f(x)(s')$ if $g(s) = \inl(x,s')$.
%
\item\emph{Partial interactive input:}\label{it:in} 
$TX=\nu\gamma.\,((X+\gamma^I)+1)$, where $I$ is a set of input values, $\eta(x) = \tuo(\inl\inl x)$,
$(f\c X\to TY)^\klstar$ is the unique such morphism $f^\klstar\c TX\to TY$ that (eliding 
the isomorphisms $T\iso (\argument+T^I)+1$)
\begin{align*}
f^\klstar(\inl\inl x) =\;& f(x),&
f^\klstar(\inl\inr h) =\;& \inl\inr (f^\klstar\comp h),&
f^\klstar(\inr\star) =\;& \inr\star.&
\end{align*}
Intuitively, $p\in TX$ is a computation that either finishes and gives a result 
in $X$, or takes an input from $I$ and continues recursively, or (unproductively) 
diverges.
\item\emph{Partial interactive output:}\label{it:out} 
$TX=\nu\gamma.\,((X+\gamma\times O)+1)$, where $O$ is a set of output values, 
$\eta(x) = \tuo(\inl\inl x)$, $(f\c X\to TY)^\klstar$ is the unique such morphism $f^\klstar\c TX\to TY$ that (eliding 
the isomorphisms $T\iso (\argument+T\times O)+1$)
\begin{align*}
f^\klstar(\inl\inl x) =\;& f(x),&
f^\klstar(\inl\inr (p,o)) =\;& \inl\inr (f^\klstar(p),o),&
f^\klstar(\inr\star) =\;& \inr\star.&
\end{align*}
The behaviour of $p\in TX$ is as in the previous case, except that it outputs 
to $O$ instead of expecting an input from $I$ in the relevant branch.
\end{enumerate}
Kleisli categories are often equivalent to categories with more familiar independent
descriptions. For example, the Kleisli category of the maybe-monad 
is equivalent to the category of partial functions and the Kleisli category of the 
powerset monad is equivalent to the category of relations. Under the monads-as-effects
metaphor, partial functions can thus be regarded as possibly non-terminating 
functions and relations as non-deterministic functions.

The above examples can often be combined. E.g.\ non-deterministic stateful
computations are obtained as $TA=S\to\PSet(A\times S)$.
The Java monad of~\cite{JacobsPoll03},
\begin{displaymath}
TX=S\to (X\times S + E\times S)+1
\end{displaymath}
with $S$ the set of states and $E$ the set of exceptions.
\end{example}


\section{Kleene Monads}\label{sec:klee} 

\begin{figure*}[t]
\begin{flalign*}
\intertext{\itshape Idempotent semiring laws:}\\[-2ex]
    \hspace{4em}&\textit{idempotence:}      &\qquad& f\lor f = f&\\                                     
    \quad&\textit{commutativity:}           && f\lor g = g\lor f &\\
    \quad&\textit{neutrality of $\bot$:}    && f\lor\bot = f &\\
    &\textit{associativity of $\lor$:}      && f\lor (g\lor h) = (f\lor g)\lor h &\\              
    &\textit{associativity of $;$~:}        && f; (g; h) = (f; g); h &                                                            
\\[2ex]
    \quad&\textit{right strictness:}        && f;\bot =\bot&\\                                     
    &\textit{right neutrality of~$\eta$:}   && f;\eta =f\\
    &\textit{right distributivity:}         && (f\lor g); h =f; h\lor g; h\hspace{7em}  
\\[2ex]
    &\textit{left strictness:}              && \bot; f = \bot\\
    &\textit{left neutrality of~$\eta$:}    && \eta; f = f\\
    &\textit{left distributivity:}          && f; (g\lor h)=f; g\lor f; h
\\[2ex]
\intertext{\itshape Iteration laws:}
    &\textit{right unfolding:}              && f^\rstar = \eta\lor f;f^\rstar\\
    &\textit{right induction:}              && f; g\leq f\implies f; g^\rstar\leq f
\\[2ex]
    &\textit{left unfolding:}               && f^\rstar = \eta\lor f^\rstar;f\\
    &\textit{left induction:}               && f; g\leq g\implies f^\rstar; g\leq g
\end{flalign*}
  \caption{Axioms Kleene algebras/monads.}
  \label{fig:klee-ax}
\end{figure*}
A Kleene algebra can be concisely defined as an idempotent semiring $(S\comma\bot\comma\eta\comma\lor\comma\,;\,)$ 
equipped with an operator $(\argument)^\rstar\c S\to S$,
such that 
\begin{itemize}
  \item $g; f^\rstar$ is the least (pre-)fixpoint of $g\lor (\argument); f$,
  \item $f^\rstar; h$ is the least fixpoint of $h\lor f;(\argument)$,
\end{itemize}
where the order is induced by $\lor$: $f\leq g$ if $f\lor g = g$. We assume 
here and henceforth that sequential composition $;$ binds stronger than $\lor$.

More concretely, a Kleene 
algebra $(S\comma\bot\comma\eta\comma\lor\comma\,;\,,(\argument)^\rstar)$ is an algebraic structure, 
satisfying the laws in~\autoref{fig:klee-ax}.
A categorical version of Kleene algebra emerges
as a class of monads, called~\emph{Kleene monads}~\cite{GoncharovSchroderEtAl09},
which can be used for interpreting effectful languages with iteration and 
non-determinism. 
%
\begin{definition}[Kleene-Kozen Category/Kleene Monad]
We say that a category $\BC$ is a \emph{Kleene-Kozen category} if $\BC$ is 
enriched over bounded (i.e.\ possessing a least element) join-semilattices
and strict join-preserving morphisms, and there is \emph{Kleene iteration} operator
\begin{displaymath}
  (\argument)^\rstar\c\BC(X,X)\to\BC(X,X),
\end{displaymath} 
such that, given $f\c Y\to Y$, $g\c Y\to Z$
and ${h\c X\to Y}$, $g\comp f^\rstar$ is the least (pre\dash)fixpoint of $g\lor (\argument)\comp f$
and $f^\rstar\comp h$ is the least (pre\dash)fixpoint of $h\lor f\comp(\argument)$. 

A monad $\BBT$ is a \emph{Kleene monad} if $\BC_{\BBT}$ is a Kleene-Kozen 
category.
\end{definition}
Recall that a monoid is nothing but a single-object category, whose morphisms 
are identified with monoid elements, and whose identity morphisms and morphism 
composition are identified with monoidal unit and composition. This suggests a 
connection between Kleene-Kozen categories and Kleene algebras.
\begin{proposition}\label{pro:monad-alg}
A Kleene algebra is precisely a Kleene-Kozen category with one object.
\end{proposition}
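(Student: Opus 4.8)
The plan is to lean on the dictionary between monoids and one-object categories recalled just above the statement, and to check that, under this dictionary, the extra structure carried by each side matches axiom for axiom. Concretely, I would fix a one-object Kleene-Kozen category $\BC$ with unique object $X$, put $S=\BC(X,X)$, and read off an algebraic structure on the set $S$; conversely, from a Kleene algebra on a carrier $S$ I would build the one-object category whose single hom-set is $S$. The remaining task is then to verify that these two passages land in the intended classes and are mutually inverse, which reduces to a line-by-line comparison of the defining axioms once we fix the translation $f;g = g\comp f$ and $\eta = \id_X$.

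First I would handle the monoid part: composition $\comp$ makes $S$ a monoid with unit $\id_X$, and under $f;g = g\comp f$ this is exactly the multiplicative monoid $(S,\eta,;)$ of an idempotent semiring, with associativity of $\comp$ giving associativity of $;$ and the identity laws giving left and right neutrality of $\eta$. Next, the enrichment: by definition the hom-set $S$ is a bounded join-semilattice, which supplies idempotence, commutativity and associativity of $\lor$ together with neutrality of $\bot$. The one substantive observation is that enrichment over \emph{strict join-preserving} maps forces composition to be a bi-morphism of join-semilattices, i.e.\ $\lor$-preserving and $\bot$-preserving in each argument separately; unwinding $\lor$-preservation in the two arguments of $g\comp f$ gives precisely left and right distributivity of $;$ over $\lor$, and unwinding $\bot$-preservation gives precisely left and right strictness $\bot;f=\bot$ and $f;\bot=\bot$. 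Thus the enriched-category data and the idempotent-semiring data are literally the same equations read through $f;g=g\comp f$.

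It remains to match the iteration operator. In the one-object case the operator $(\argument)^\rstar\c\BC(X,X)\to\BC(X,X)$ is just a map $S\to S$, and the semilattice order $f\appr g \iff f\lor g=g$ of the enrichment coincides with the order $f\leq g$ induced by $\lor$ in a Kleene algebra, so ``least (pre\dash)fixpoint'' carries the same meaning on both sides. The two categorical requirements, that $g\comp f^\rstar$ be the least (pre\dash)fixpoint of $g\lor(\argument)\comp f$ and that $f^\rstar\comp h$ be the least (pre\dash)fixpoint of $h\lor f\comp(\argument)$, then transcribe into the two Kleene-algebra requirements on $g;f^\rstar$ and $f^\rstar;h$; here I would note that the two conditions exchange roles under the orientation $f;g=g\comp f$, so that the ``left'' categorical law becomes the ``right'' algebraic one and vice versa, but the pair is the same.

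I expect the only real friction to be the enrichment bookkeeping: one must recall that the tensor product of the enriching category classifies separately-join-preserving maps, so that ``$\comp$ is enriched'' unpacks exactly to ``$\comp$ is $\lor$-preserving and $\bot$-preserving in each variable'', and then keep the diagrammatic-versus-applicative orientation straight when pairing up the left/right distributivity and strictness laws. Once this is pinned down, both constructions are visibly inverse to one another, establishing the asserted identification.
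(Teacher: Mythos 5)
Your proof is correct and follows exactly the route the paper intends: the paper gives no explicit proof of this proposition, relying on the preceding remark that a monoid is a one-object category, and your unwinding of the enrichment into the distributivity/strictness laws and of the two least-fixpoint conditions (with the left/right swap under $f;g=g\comp f$) is precisely the omitted verification.
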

We record the following characterization of Kleene-Kozen categories 
(hence, also of Kleene algebras by~\autoref{pro:monad-alg}).
\begin{proposition}\label{pro:klee-eq}
A category $\BC$ is a Kleene-Kozen category iff 
\begin{itemize}[wide]
  \item $\BC$ is enriched over bounded join-semilattices and strict join-preserving morphisms;
  \item there is an operator $(\argument)^\rstar\c\BC(X,X)\to\BC(X,X)$,
  such that 
  \begin{enumerate}[leftmargin=10ex, labelindent=12ex]
    \item\label{it:klee-eq1} $f^\rstar = \id\lor f^\rstar\comp f$;
    \item\label{it:klee-eq2} $\id^\rstar = \id$;
    \item\label{it:klee-eq4} $f^\rstar = (f\lor\id)^\rstar$;    
    \item\label{it:klee-eq3} $h\comp f = f\comp g$ ~implies~ $h^\rstar\comp f = f\comp g^\rstar$.
  \end{enumerate} 
\end{itemize} 
\end{proposition}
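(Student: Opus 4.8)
The plan is to prove the two implications separately. In both directions the enrichment hypothesis is shared and is exactly what turns each hom-set into an idempotent semiring: the bounded join-semilattice structure supplies idempotence, commutativity, associativity and neutrality of $\bot$, while strictness and join-preservation of composition on either side supply the two strictness and two distributivity laws, with $\id$ as two-sided unit. In particular composition is monotone in both arguments. Hence in either direction the real work concerns only the iteration laws, and I shall freely use monotonicity of $\lor$ and of $\comp$.

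For the forward direction, assume $\BC$ is a Kleene-Kozen category. I read off \ref{it:klee-eq1} from the right fixpoint characterization with $g=\id$, since then $f^\rstar$ is the least fixpoint of $x\mapsto\id\lor x\comp f$; and \ref{it:klee-eq2} follows (taking $f=g=\id$) because the least pre-fixpoint of $x\mapsto\id\lor x$ is plainly $\id$. For \ref{it:klee-eq4} I compute the map defining $(f\lor\id)^\rstar$, namely $x\mapsto\id\lor x\comp(f\lor\id)=\id\lor x\lor x\comp f$, and check that $f^\rstar$ is its least pre-fixpoint: it is a fixpoint since $f^\rstar\geq\id$ and $f^\rstar\geq f^\rstar\comp f$ by \ref{it:klee-eq1}, and any pre-fixpoint $x$ satisfies $\id\leq x$ and $x\comp f\leq x$, hence is a pre-fixpoint of $x\mapsto\id\lor x\comp f$ and so lies above $f^\rstar$. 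Finally, for the commutation law \ref{it:klee-eq3}, given $h\comp f=f\comp g$ I prove two inequalities. First, $h^\rstar\comp f\leq f\comp g^\rstar$, because $f\comp g^\rstar$ is a fixpoint of $x\mapsto f\lor h\comp x$ (using $h\comp f=f\comp g$ and the left unfolding $g^\rstar=\id\lor g\comp g^\rstar$, itself read off the left characterization with $h=\id$), while $h^\rstar\comp f$ is the least such by the left characterization. Second, $f\comp g^\rstar\leq h^\rstar\comp f$, because $f\comp g^\rstar$ is the least fixpoint of $x\mapsto f\lor x\comp g$ by the right characterization, whereas $h^\rstar\comp f$ is a pre-fixpoint of it, using $f\comp g=h\comp f$ together with $\id\leq h^\rstar$ and $h^\rstar\comp h\leq h^\rstar$.

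The backward direction is the substance. First I recover the missing \emph{left} unfolding. Applying \ref{it:klee-eq3} to the purely formal commuting square $f\comp f=f\comp f$ (take all three morphisms equal to $f$) yields $f^\rstar\comp f=f\comp f^\rstar$, so $(\argument)^\rstar$ commutes with its argument; substituting into \ref{it:klee-eq1} gives $f^\rstar=\id\lor f\comp f^\rstar$. Next I establish the induction rules, the only remaining ingredient of the two least-fixpoint conditions. The key observation is that an inequality $x\comp f\leq x$ is equivalent to the equation $x\comp(f\lor\id)=x$, i.e.\ to a commuting square $\id\comp x=x\comp(f\lor\id)$. Feeding this square into \ref{it:klee-eq3} (connecting morphism $x$, target endomorphism $\id$, source endomorphism $f\lor\id$) and simplifying the conclusion with \ref{it:klee-eq2} and \ref{it:klee-eq4} gives $x=x\comp(f\lor\id)^\rstar=x\comp f^\rstar$. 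Thus $x\comp f\leq x$ forces $x\comp f^\rstar=x$, and by monotonicity $g\leq x$ with $x\comp f\leq x$ yields $g\comp f^\rstar\leq x\comp f^\rstar=x$; this is right induction, and combined with the fixpoint equation from \ref{it:klee-eq1} it shows $g\comp f^\rstar$ is the least pre-fixpoint of $x\mapsto g\lor x\comp f$. The left characterization is obtained symmetrically, now using the left unfolding derived above and the mirror-image square $(f\lor\id)\comp x=x\comp\id$.

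I expect the backward direction to be the main obstacle, and specifically the two uses of \ref{it:klee-eq3}, which are not obvious a priori. One must realise (i) that the self-square $f\comp f=f\comp f$ already forces $(\argument)^\rstar$ to commute with its base and thereby upgrades right unfolding to left unfolding, and (ii) that an invariance inequality can be repackaged as a commuting square with an identity leg, so that uniformity collapses it, with \ref{it:klee-eq2} and \ref{it:klee-eq4} serving precisely to trivialise the star of an identity and of $f\lor\id$ in the resulting conclusion. It is worth stressing that \ref{it:klee-eq3} is assumed for \emph{all} morphisms, not merely a distinguished class of pure maps, and it is this unrestricted uniformity that makes both induction rules derivable; everything else is routine semiring bookkeeping.
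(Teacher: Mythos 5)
Your proof is correct and follows essentially the same route as the paper: necessity by reading each law off the least-(pre-)fixpoint characterizations, and sufficiency by first deriving the left unfolding from the self-square $f\comp f=f\comp f$ via \itref{it:klee-eq3} and then collapsing the (pre-)fixpoint inequality, recast as a commuting square with an identity leg, using \itref{it:klee-eq2}, \itref{it:klee-eq4} and \itref{it:klee-eq3}. The only (harmless) presentational difference is that you treat pre-fixpoints inequationally where the paper works with the equation $h=g\lor f\comp h$.
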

\begin{proof}
Let us show \emph{necessity}. 
\begin{enumerate}[wide]
  \item The law $f^\rstar = \id\lor f^\rstar\comp f$ holds by assumption.
  \item Since $\id = \id\lor\id\comp\id$, $\id$ is a
fixpoint of $f\mto\id\lor f\comp\id$, and thus $\id^\rstar\leq\id$. Also 
$\id^\rstar = \id\lor\id^\rstar\comp\id\geq\id$. Hence $\id = \id^\rstar$ by 
mutual inequality.
   \item To show that $(f\lor \id)^\rstar\leq f^\rstar$, note that 
   $f^\rstar = \id\lor f^\rstar\comp f = \id\lor f^\rstar\comp f\lor f^\rstar =\id\lor f^\rstar\comp (f\lor\id)$,
   and use the fact that $(f\lor\id)^\rstar$ is the least fixpoint. The opposite 
   inequality is shown analogously, by exploiting the fact that $f^\rstar$ is a 
   least fixpoint.
   \item Suppose that $h\comp f = f\comp g$,
and show that $h^\rstar\comp f = f\comp g^\rstar$. Note that
\begin{displaymath}
f\lor (h^\rstar\comp f)\comp g = f\lor h^\rstar\comp f\comp g 
= f\lor h^\rstar\comp h\comp f 
= h^\rstar\comp f,
\end{displaymath}
%
i.e.\ $h^\rstar\comp f$ satisfies the fixpoint equation for $f\comp g^\rstar$,
and therefore $h^\rstar\comp f\geq f\comp g^\rstar$. By a symmetric argument, 
$h^\rstar\comp f\leq f\comp g^\rstar$, hence $h^\rstar\comp f= f\comp g^\rstar$.
\end{enumerate}
We proceed with \emph{sufficiency}. Suppose that $(\argument)^\rstar$ is as described 
in the second clause of the present proposition. Observe that by combining assumptions~\itref{it:klee-eq1} 
and~\itref{it:klee-eq3} we immediately obtain the dual version of~\itref{it:klee-eq1}, 
which is $f^\rstar = \id\lor f\comp f^\rstar$.
Now, fix $f\c Y\to TY$ and $g\c Y\to TZ$, and show that
$f^\rstar\comp g$ is the least fixpoint of $g\lor f\comp (\argument)$ -- we omit
proving the dual property, since it follows by a dual argument. From 
$f^\rstar = \id\lor f^\rstar\comp f$ we obtain $f^\rstar\comp g = g\lor f\comp(f^\rstar\comp g)$,
i.e.\ $f^\rstar\comp g$ is a fixpoint. We are left to show that it is the least
one. Suppose that $h = g\lor f\comp h$ for some $h$, which entails 
\begin{displaymath}
  (\id\lor f)\comp h = h\lor f\comp h = g\lor f\comp h\lor f\comp h = h = h\comp\id.
\end{displaymath}
Since $g\leq h$, using assumptions~\itref{it:klee-eq2},~\itref{it:klee-eq4}~and~\itref{it:klee-eq3}, we obtain
\begin{displaymath}
  f^\rstar\comp g\leq (\id\lor f)^\rstar\comp g \leq (\id\lor f)^\rstar\comp h = h\comp\id^\rstar = h,
\end{displaymath}
as desired.
\end{proof}
The axioms of Kleene monads do not in fact need a monad, and can be interpreted
in any category. We focus on Kleisli categories for two reasons: (i)~in practice, 
Kleene-Kozen categories are often realized as Kleisli categories, and 
monads provide a compositional mechanism for constructing more Kleene-Kozen
categories by generalities; (ii) we will relate Kleene monads and Elgot monads,
and the latter are defined by axioms, which do involve both general Kleisli morphisms 
and the morphisms of the base category. 
\begin{example}\label{exa:kle}
Let us revisit~\autoref{exa:mon}. Many monads therein fail to be Kleene simply 
because they fail to support binary non-determinism. \autoref{exa:mon}.\itref{it:dis} is an interesting 
case, since we can define the operation of probabilistic choice 
$+_p\c\BC(X,TY)\times\BC(X,TY)\to\BC(X,TY)$ indexed by $p\in [0,1]$, meaning that 
$x+_p y$ is resolved to $x$ with probability $p$ and to $y$ with probability $1-p$.
For every ${x\in X}$, $f(x)+_p g(x)$ is a \emph{convex sum} of the distributions~$f(x)$ and~$g(x)$. This 
operation satisfies the axioms of \emph{barycentric algebras} (or, \emph{abstract convex sets}~\cite{Stone49}), 
which are somewhat similar to those of a monoid, but with the multiplication operator indexed over 
$[0,1]$. To get rid of this indexing, one can remove the requirement that probabilities
sum up to at most $1$ and thus obtain spaces of \emph{valuations}~\cite{VaraccaWinskel06}
instead of probability distributions. Valuations can be conveniently added pointwise,
and thus defined addition satisfies monoidal laws, but fails idempotence, hence
still does not yield a Kleene monad. Given two valuations $v$
and $w$, we also can define $v\lor w$ as the pointwise maximum. This  
satisfies the axioms of semilattices, but fails both distributivity laws.

\autoref{exa:mon}.\itref{it:plo} is the $\Set$-reduct 
or \emph{Plotkin powerdomain}~\cite{Plotkin76} over a flat domain. It supports 
proper non-deterministic choice, but the only candidate for $\bot$ is not a unit 
for it. 

Kleene monads of~\autoref{exa:mon} are only~\itref{it:pow} and~\itref{it:wri}. 
The non-deterministic state monad over $TX=(\PSet(X\times S))^S$ obtained by 
adapting~\autoref{exa:mon}.\itref{it:sta} in the obvious way is also~Kleene.
\end{example}
Except for the powerset monad, our examples of Kleene monads are in fact obtained 
by generic patterns.
\begin{proposition}\label{pro:klee}
Let $\BBT$ be a Kleene monad. Then so are 
\begin{enumerate}[wide]
  \item the state monad transformer $(T(\argument\times S))^S$ for every $S$;
  \item the writer monad transformer $T(M\times\argument)$ for every monoid $(M,\eps,\bullet)$ 
  if $\BBT$ is strong and strength $\tau_{X,Y}\c X\times TX\to T(X\times Y)$ respects the 
  Kleene monad structure, as follows:
  \begin{align}\label{eq:str-klee}
\begin{gathered}
    \tau\comp (\id\times\bot) = \bot,\qquad
    \tau\comp (\id\times f^\rstar) = (\tau\comp (\id\times f))^\rstar,\\[1ex]
    \tau\comp (\id\times (f\lor g)) = \tau\comp (\id\times f)\lor \tau\comp (\id\times g).
  \end{gathered}
  \end{align}
\end{enumerate}
\end{proposition}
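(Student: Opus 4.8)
The plan is to reduce both statements, via the characterisation of \autoref{pro:klee-eq}, to transporting the Kleene--Kozen structure of $\BC_\BBT$ along a presentation of each transformed Kleisli category as a subcategory of $\BC_\BBT$.

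For part~1 this subcategory is full. Currying yields a join-semilattice isomorphism $\BC_{\mathbb T^S}(X,Y)\iso\BC_\BBT(X\times S,Y\times S)$ (joins in the exponential being pointwise) under which the unit and Kleisli composition of $\mathbb T^S=(T(\argument\times S))^S$ become the identity and Kleisli composition of $\BBT$ on objects of the form $X\times S$. Hence $\BC_{\mathbb T^S}$ is isomorphic to the full subcategory $\BD\ito\BC_\BBT$ spanned by $\{X\times S\}$. A full subcategory inherits the join-semilattice enrichment verbatim and is trivially closed under $(\argument)^\rstar$, which is an endo-operation on each hom-set $\BC_\BBT(X\times S,X\times S)$; thus all four clauses of \autoref{pro:klee-eq} hold in $\BD$ as instances of the same clauses in $\BC_\BBT$, involving only morphisms, identities, joins and stars computed identically in $\BD$. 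So $\BD$, hence $\mathbb T^S$, is Kleene.

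For part~2 currying is replaced by the \emph{accumulating lift}. Writing $\mathbb T^W=T(M\times\argument)$, for a Kleisli morphism $f\c X\to M\times Y$ I would set $\bar f = T(a)\comp\tau_{M,M\times Y}\comp(\id_M\times f)\c M\times X\to M\times Y$ as a morphism of $\BC$, where $a\c M\times(M\times Y)\to M\times Y$ sends $(m,(n,y))$ to $(m\bullet n,y)$. Associativity and unitality of $(M,\eps,\bullet)$ make $K(X)=M\times X$, $K(f)=\bar f$ a faithful functor $\BC_{\mathbb T^W}\to\BC_\BBT$ preserving identities and Kleisli composition, with $\bar f\comp\eta^W=f$ recovering $f$ from its image, where $\eta^W\c X\to M\times X$ is the transformer's unit $x\mapsto(\eps,x)$. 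Its image $\CE$ consists of the ``equivariant'' morphisms and $\BC_{\mathbb T^W}\iso\CE$, so it remains to show $\CE$ is a sub-Kleene--Kozen-category, i.e.\ closed under $\bot,\lor,\comp,\id$ and $(\argument)^\rstar$. Closure under $\comp$ and $\id$ is functoriality of $K$; closure under $\bot$ and $\lor$ is where \eqref{eq:str-klee} enters: since $K$ is $\tau_{M,\argument}\comp(\id_M\times\argument)$ followed by $\BC_\BBT$-postcomposition with the pure morphism induced by $a$, and $\BC_\BBT$-composition is strict and join-preserving in each argument, the first two laws of \eqref{eq:str-klee} give $\bar\bot=\bot$ and $\overline{f\lor g}=\bar f\lor\bar g$.

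The crux is closure under $(\argument)^\rstar$, i.e.\ that $\bar f^\rstar$ is again equivariant; this is exactly where the iteration clause of \eqref{eq:str-klee} is indispensable. I would first record the equivariance of $\bar f$ as an intertwining $\bar f\comp(\eta\comp a)=(\eta\comp a)\comp\Phi(\bar f)$ in $\BC_\BBT$, where $\Phi(k)=\tau_{M,M\times X}\comp(\id_M\times k)$; this follows directly from associativity of $\bullet$. Applying the uniformity clause \autoref{pro:klee-eq}.\itref{it:klee-eq3} to this intertwining (with $h=\bar f$, $g=\Phi(\bar f)$, the middle morphism $\eta\comp a$) yields $\bar f^\rstar\comp(\eta\comp a)=(\eta\comp a)\comp\Phi(\bar f)^\rstar$, while the iteration law of \eqref{eq:str-klee} rewrites $\Phi(\bar f)^\rstar=(\tau_{M,M\times X}\comp(\id_M\times\bar f))^\rstar=\tau_{M,M\times X}\comp(\id_M\times\bar f^\rstar)=\Phi(\bar f^\rstar)$. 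Thus $\bar f^\rstar$ satisfies the same intertwining, so it is equivariant, and by unitality $\overline{\bar f^\rstar\comp\eta^W}=\bar f^\rstar$; defining $(\argument)^\rstar$ on $\BC_{\mathbb T^W}$ by $f^\rstar:=\bar f^\rstar\comp\eta^W$ (the star on the right being that of $\BC_\BBT$) then gives $K(f^\rstar)=\bar f^\rstar$. Since $\CE$ is now closed under the whole structure, the four clauses of \autoref{pro:klee-eq} transfer from $\BC_\BBT$ to $\CE\iso\BC_{\mathbb T^W}$ exactly as in part~1, proving $\mathbb T^W$ Kleene. The main obstacle is precisely this last step: without the iteration clause of \eqref{eq:str-klee} one cannot commute $(\argument)^\rstar$ past the strength, equivariance of $\bar f^\rstar$ would fail, and the transported star would be ill-defined.
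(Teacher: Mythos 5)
Your argument is correct. Part~1 is the paper's proof verbatim: both identify the Kleisli category of $(T(\argument\times S))^S$ with the full subcategory of $\BC_{\BBT}$ on the objects $X\times S$ and restrict the structure along the inclusion. For part~2 you also converge on the paper's key data --- your $\bar f$ is the paper's $f^\circ$, and your star $\bar f^{\rstar}\komp\eta^W=\bar f^{\rstar}\brks{\eps,\id}$ is exactly the paper's $(f^\circ)^\rstar\brks{\eps,\id}$ --- but the verification is organised differently. The paper checks the unfolding and induction laws of \autoref{fig:klee-ax} head-on, by computation with the identities \eqref{l1}--\eqref{l5}; you instead transport the axiomatization of \autoref{pro:klee-eq} along the faithful functor $K$, so that everything reduces to closure of its image under $\id$, composition, $\bot$, $\lor$ and $(\argument)^\rstar$, the last of these being the only non-trivial point. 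Your derivation of that point --- intertwining $\bar f$ with $\eta\comp a$, applying the uniformity clause~\itref{it:klee-eq3} of \autoref{pro:klee-eq}, and commuting $(\argument)^\rstar$ past the strength via \eqref{eq:str-klee} --- is a welcome piece of explicitness: the paper's corresponding step is its appeal to \eqref{l3} at $f\coloneqq(f^\circ)^\rstar$ in the proof of the induction laws, and \eqref{l3} as written holds only for morphisms in the image of $(\argument)^\circ$, so the paper is tacitly using precisely the equivariance of $(f^\circ)^\rstar$ that you prove. What the paper's route buys is independence from \autoref{pro:klee-eq} (it only ever invokes the least-fixpoint axioms of Kleene monads directly); what yours buys is that the enrichment, unfolding and induction laws for the transformed monad all come for free once the image of $K$ is known to be closed under the operations, at the cost of relying on the uniformity-based characterization.
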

\begin{proof}
\begin{enumerate}[wide]
  \item By definition, the Kleisli category of the state transformer is equivalent
  to the full subcategory of $\BC_{\BBT}$ over the objects of the form $X\times S$
  (using the isomorphism $\BC(X,(T(Y\times S))^S)\iso \BC(X\times S,T(Y\times S))$). 
  The enrichment, the iteration operator and the axioms are clearly restricted along 
  the induced inclusion functor.
  \item
  The semilattice structure for every $\BC(X,T(M\times Y))$ is inherited from~$\BC_{\BBT}$, 
  but to show enrichment, the strictness and the distributivity laws must be 
  verified manually.
  For every $f\c X\to T(M\times Y)$, let 
  $f^\circ\c M\times X\to T(M\times Y)$ be as follows
  \begin{align*}
  M\times X\xto{\tau\comp (\id\times f)} T(M\times (M\times Y))\,\iso\, T((M\times M)\times Y) \xto{T(\bullet\times\id)} T(M\times Y).
  \end{align*}
  The assumptions~\eqref{eq:str-klee} entail the following identities:
  \begin{flalign*}
  &&\bot^\circ &=\bot\Label{l1}& (f\lor g)^\circ &= f^\circ\lor g^\circ \Label{l2}& (f\brks{\eps,\id})^\circ &= f \Label{l3}\\
  &&&&&& (f^\circ\komp g)^\circ &= f^\circ\komp g^\circ \Label{l4}& f^\circ\brks{\eps,\id} &= f \Label{l5}
\end{flalign*}
  Kleisli composition of the transformed monad sends $f\c X\to T(M\times Y)$
  and $g\c Y\to T(M\times Z)$ to the Kleisli composition~$g^\circ\komp f$ of~$\BBT$.
  Left strictness and right distributivity are then obvious, while right strictness and
  left distributivity follow too by~\eqref{l1},\eqref{l2}: $\bot^\circ\komp f=\bot\komp f=\bot$, 
  $(f\lor g)^\circ\komp h=(f^\circ\lor g^\circ)\komp h = f^\circ\komp h\lor g^\circ\komp h$.
  Kleene star for the transformed monad is defined as $(f^\circ)^\rstar\brks{\eps,\id}$ for 
  every $f\c X\to T(M\times X)$ where $\eps\c 1\to M$ is the monoid unit.
  \begin{flalign*}
  &&(f^{\circ})^\rstar\brks{\eps,\id} 
    =&\; (\eta\lor f^{\circ} \komp (f^{\circ})^\rstar) \brks{\eps,\id}\\
  &&  =&\; (\eta\lor f^{\circ} \komp (f^{\circ})^\rstar)\komp\eta\brks{\eps,\id}\\
  &&  =&\; \eta\brks{\eps,\id}\lor f^{\circ} \komp (f^{\circ})^\rstar\komp\eta\brks{\eps,\id}\\*
  &&  =&\; \eta\brks{\eps,\id}\lor f^{\circ} \komp (f^{\circ})^\rstar\brks{\eps,\id},\\[2ex]
  &&(f^{\circ})^\rstar\brks{\eps,\id} 
    =&\; (\eta\lor (f^{\circ})^\rstar\komp f^{\circ} ) \brks{\eps,\id}\\* 
  &&  =&\; (\eta\lor (f^{\circ})^\rstar\komp f^{\circ} )\komp\eta\brks{\eps,\id}\\ 
  &&  =&\; \eta\brks{\eps,\id}\lor (f^{\circ})^\rstar\komp f^{\circ} \komp\eta\brks{\eps,\id}\\
  &&  =&\; \eta\brks{\eps,\id}\lor (f^{\circ})^\rstar\komp f&\by{\eqref{l5}}\\
  &&  =&\; \eta\brks{\eps,\id}\lor ((f^{\circ})^\rstar\brks{\eps,\id})^\circ\komp f.&\by{\eqref{l3}}
  \end{flalign*}
If $f^\circ\komp g\leq g$ then by~\eqref{l3}, $((f^{\circ})^\rstar\brks{\eps,\id})^\circ\komp g=(f^\circ)^\rstar\komp g\leq g$.
Analogously, if $f^\circ\komp g\leq f$ then by~\eqref{l4} $f^\circ\komp g^\circ = (f^\circ\komp g)^\circ\leq f^\circ$,
and therefore $f^\circ\komp (g^\circ)^\rstar \brks{e,\id} \leq  f^\circ\brks{e,\id}= f$.
Thus, the axioms of iteration are all satisfied.
\qed
\end{enumerate}
\noqed\end{proof}
\begin{example}
Note that the powerset monad $\PSet$ is a Kleene monad with~$f^\rstar$
calculated as a least fixpoint of $\eta\lor(\argument)\komp f$. 
\begin{enumerate}[wide]
  \item By~\autoref{pro:klee}.1, $(\PSet(\argument\times S))^S$ is a Kleene monad.
  \item By applying \autoref{pro:klee}.2, to the free monoid $A^\star$ of finite strings over an alphabet $A$
  we obtain that $\PSet(A^\star\times\argument)$.
\end{enumerate}
\begin{figure*}[t]
\FIX 
\vspace{-2ex}
\begin{center}
  \includegraphics[scale=1]{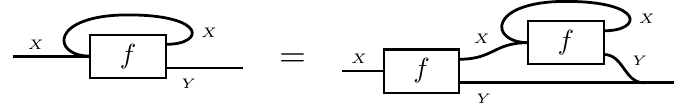}
\end{center}
\NAT 
\vspace{-2ex}
\begin{center}
  \includegraphics[scale=1]{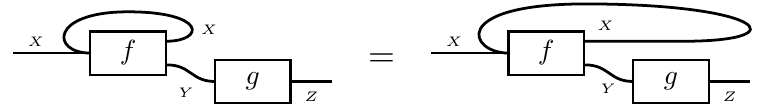}
\end{center}
\bigskip
\COD 
\vspace{-2ex}
\begin{center}
  \includegraphics[scale=1]{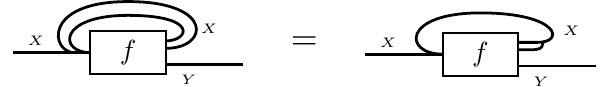}
\end{center}
\bigskip
\UNI 
\vspace{-2ex}
\begin{center}
  \includegraphics[scale=1]{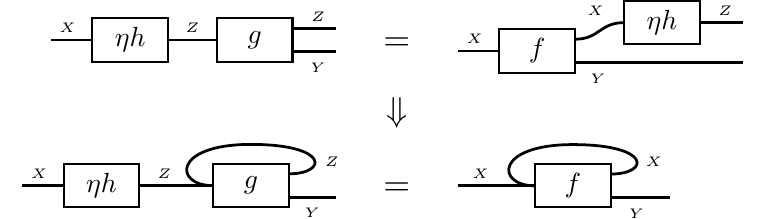}
\end{center}
  \caption{Axioms of Elgot monads.}
  \label{fig:elg-ax}
\end{figure*}
It is easy to see that for every Kleene monad $\BBT$, $\Hom(1,T1)$ is a Kleene 
algebra. By applying this to the above clauses we obtain correspondingly the standard 
\emph{relational} and \emph{language-theoretic} models of Kleene algebra~\cite{KozenSmith96}.
\end{example}
\section{Elgot Monads}\label{sec:elg}
A general approach to monad-based iteration is provided by Elgot monads. 
We continue under the assumption that $\BC$ supports finite coproducts. This, 
in particular, yields an if-the-else operator sending $b\in\BC(X,X+X)$ and 
$f,g\in\BC(X,Y)$ to $\ift{b}{p}{q} = [q,p]\komp b\in\BC(X,Y)$. Note that for any 
monad $\BBT$ on $\BC$, $\BC_{\BBT}$ inherits finite coproducts.
\begin{definition}[Elgot monad]\label{def:elgot}
  An \emph{Elgot monad} in a category with binary coproducts is a monad $\BBT$ equipped with an
  \emph{Elgot iteration} operator 
  \begin{displaymath}
  (\argument)^\istar\c\BC(X,T(Y+X))\to\BC(X,TY), 
  \end{displaymath}
  subject to the following principles:
\begin{flalign}
&\label{it:fix}\tag*{}\axname{Fixpoint}:       [\eta, f^{\istar}] \komp f = f^{\istar} & (f\c X\to T(Y+X))\\[1ex]
&\label{it:nat}\tag*{}\axname{Naturality}:     g\komp f^{\istar} = ([\eta\inl\komp g, \eta\inr] \komp f)^{\istar} & (g\c Y \to TZ, f\c X\to T(Y+X))\\[1ex]
&\label{it:cod}\tag*{}\axname{Codiagonal}:     f^{\istar\istar} = ([\eta,\eta\inr] \komp f)^{\istar} & (f\c X \to T((Y + X) + X))\\
&\label{it:uni}\tag*{}\axname{Uniformity}:     \lrule{}{g \komp \eta h = \eta(\id+h)\komp f}{g^{\istar} \komp \eta h = f^{\istar}} & ({h\c X\to Z}, g\c Z\to T(Y + Z),\\[-1ex]&& f\c X \to T(Y + X))\notag
\end{flalign}
\end{definition}

\noindent
These laws are easier to grasp by depicting them graphically (\autoref{fig:elg-ax}),
more precisely speaking, as \emph{string diagrams}~(cf.~\cite{JoyalStreetEtAl96,Hasegawa03}
for a rigorous treatment in terms of monoidal categories).
Iterating $f$ is depicted as a feedback loop. It is 
then easy to see that while~\FIX expresses the basic fixpoint property of iteration,
\NAT and~\COD are essentially rearrangements of wires. The~\UNI law is a form of 
induction: the premise states that $\eta h$ can be pushed over $g$, so that at the same 
time~$g$ is replaced by $f$, and the conclusion is essentially the result of 
closing this transformation under iteration. \UNI is therefore the only law, which alludes 
to pure morphisms. Intuitively, the morphisms $f$ and $g$
can be seen as programs operating correspondingly on $X$ and~$Y$ as their state
spaces, and $h\c X\to Y$ is a map between these state spaces. \UNI thus ensures that 
the behaviour of iteration does not depend on the shape of the state
space.
It is critical 
for this view that~$h$ is pure, i.e.\ does not trigger any side-effects.
\begin{remark}[Divergence]\label{rem:div}
Every Elgot monad comes together with the definable (unproductive) divergence 
constant $\div = (\eta\inr)^\istar$. Graphically, $\div\c X\to T\iobj$ will be depicted as 
\makebox
{
\begin{tikzpicture} [cprop]
  \draw [sigma line, thick] (0,0) to ++(6.5ex,0) node[circle,draw,fill=white,inner sep=0pt,minimum size=4pt] {};
\end{tikzpicture} 
}, symmetrically to the depiction of the initial morphism $\bang\c\iobj\to TX$ as
\makebox
{
\begin{tikzpicture} [cprop]
  \draw [sigma line, thick] (0,0) to ++(-6.5ex,0) node[bullet] {};
\end{tikzpicture} 
}.
\end{remark}
\begin{example}[Elgot Monads]\label{exa:elg}
Clauses~\itref{it:may}--\itref{it:out} of~\autoref{exa:mon} all define
Elgot monads. A standard way of introducing Elgot iteration is enriching the Kleisli category 
over pointed complete partial orders and defining $(f\c X\to T(Y+X))^\istar$ as
a least fixpoint of the map $[\eta,\argument]\komp f\c\BC(X,TY)\to\BC(X,TY)$ by
the \emph{Kleene fixpoint theorem}. This scenario covers~\itref{it:may}--\itref{it:sta}.
In all these cases, we inherit complete partial order structures on $\Set(X,TY)$
by extending canonical complete partial order structures from $TY$ pointwise.
In particular, in~\itref{it:exc}, we need to chose the divergence element $\div\in E$. This
choice induces a \emph{flat domain} structure on $X+E$: $x\appr y$ if $x=y$ or $x=\div$.
The induced divergence constant in the sense of \autoref{rem:div} then coincides with 
$\div$, and hence there are at least as many distinct Elgot monad structures on the 
exception monad as exceptions.

Clauses~\itref{it:in} and~\itref{it:out} fit a different pattern. 
For every Elgot monad~$\BBT$ and every endofunctor $H$, if all final coalgebras 
$T_HX = \nu\gamma.\,T(X+H\gamma)$ exist then $T_H$ extends to an Elgot monad~\cite{GoncharovJakobEtAl18},
called the \emph{coalgebraic generalized resumption transform} of $\BBT$. 
This yields~\itref{it:in} and~\itref{it:out} by taking $\BBT$ to be the maybe-monad 
in both cases and $HX=X^I$ and $HX=O\times X$ respectively.
\end{example}
\begin{remark}[Dinaturality]\label{rem:din}
A classical law of iteration, which is not included in~\autoref{def:elgot}, 
is the \DIN law, which has the following graphical representation: 
\begin{center}
  \includegraphics[scale=1]{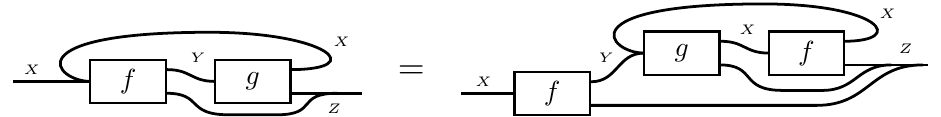}
\end{center}
This law 
has been used in one of the equivalent axiomatization
of iteration theories~\cite{BloomEsik93} (under the name \emph{``composition identity''})
and thus was initially inherited in the definition of Elgot monads~\cite{AdamekMiliusEtAl10,GoncharovSchroderEtAl18}. 
However, \'Esik and Goncharov~\cite{EsikGoncharov16} latter discovered that 
\DIN is derivable in presence of \UNI. 
\end{remark}
\autoref{rem:din} poses the question, if the present axiomatization of Elgot 
monads possibly contains further derivable laws. Here, we resolve it in the negative.

\begin{proposition}\label{pro:mini}
The axiomatization in~\autoref{def:elgot} is minimal.
\end{proposition}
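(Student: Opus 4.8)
The plan is to read \emph{minimality} as pairwise independence: the axiomatisation is minimal exactly when none of \FIX, \NAT, \COD, \UNI is derivable from the remaining three. So it suffices to produce, for each of the four laws $A$, a \emph{witness} consisting of a monad $\BBT$ together with an operator $(\argument)^\istar\c\BC(X,T(Y+X))\to\BC(X,TY)$ that validates the other three laws but refutes $A$. Four such witnesses settle the claim, since any derivation of $A$ from the others would in particular hold in its witness, contradicting the refutation.

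The two laws that pin down a genuine iteration can be refuted by degenerate operators. For \FIX, take the maybe\dash monad and set $f^\istar$ to be the constant divergence $\div\c X\to TY$ (i.e.\ $x\mapsto\inr\star$), independently of $f$. Since $\div$ is absorbed on both sides of Kleisli composition ($g\komp\div=\div$ and $\div\komp\eta h=\div$) and is fixed under iteration, \NAT, \COD and \UNI hold trivially, whereas $[\eta,\div]\komp f\neq\div$ already for $f=\eta\inl$, so \FIX is refuted. For \NAT, take the powerset monad $\PSet$ and let $f^\istar$ be the \emph{greatest} fixpoint of $[\eta,\argument]\komp f$, which exists by Knaster--Tarski as $\BC(X,\PSet Y)$ is a complete lattice. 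This is a fixpoint, so \FIX holds, and the standard fixpoint\dash fusion argument for canonical fixpoints shows that \COD and \UNI survive. But \NAT fails: for the pure loop $f\c\{\star\}\to\PSet(\{a\}+\{\star\})$, $x\mapsto\{\inr\star\}$, the morphism $g\komp f^\istar$ evaluates at $\star$ to $g(a)$, while $([\eta\inl\komp g,\eta\inr]\komp f)^\istar$ evaluates to the whole of $Z$, and these differ as soon as $g$ does not cover $Z$.

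The delicate cases are \COD and \UNI, and I expect them to be the main obstacle, for a structural reason. Because \FIX forces $f^\istar$ to be a genuine fixpoint of $[\eta,\argument]\komp f$, the only remaining freedom is the value assigned on the \emph{recurrent} (never\dash exiting) part of the state space; and on simple monads \NAT already pins this down, e.g.\ on $\PSet$ the law applied to a pure loop forces the recurrent value to be absorbed by every post\dash composition, hence to be $\div$, which recovers the least fixpoint and thereby validates \COD and \UNI automatically. Consequently a witness that keeps \FIX and \NAT but breaks \COD or \UNI cannot live on $\PSet$ or the maybe\dash monad; it must be sought on a monad whose \emph{nonterminating} behaviour carries enough information to be reshuffled. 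For \UNI there is a clean conceptual route: \FIX, \NAT, \COD (together with \DIN, which by \autoref{rem:din} is anyway implied by \UNI) amount to a Conway\dash style operator, and it is classical that Conway theories are strictly weaker than iteration theories~\cite{BloomEsik93}; a Conway structure that is not an iteration theory validates \FIX, \NAT, \COD but must refute \UNI, since \UNI on top of these would force it to be an iteration theory. Realising such a structure as the Kleisli category of a genuine (multi\dash object) monad, so that \UNI has nontrivial content, yields the \UNI witness.

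The genuinely hardest case is then \COD, for which the Conway route is unavailable (Conway operators validate \COD by definition). Here my plan is to use an infinite\dash trace or resumption\dash style monad, as in the coalgebraic resumption transform underlying \autoref{exa:mon}.\itref{it:in},\itref{it:out}, where a loop yields a nontrivial infinite trace rather than a single $\div$: this provides room to assign recurrent values that remain fixpoints of $[\eta,\argument]\komp f$ and remain natural in the output, yet distinguish a nested loop from its flattened merge, thereby violating \COD while preserving \FIX, \NAT and \UNI. The routine part of the argument is the law\dash by\dash law verification for each witness, each being a finite diagram chase; the real work is designing the recurrent\dash behaviour witness for \COD (and the Conway witness for \UNI) so that it satisfies \emph{exactly} three of the four laws.
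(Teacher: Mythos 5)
Your reading of minimality and your overall strategy (one separating model per axiom) coincide with the paper's, and your two completed witnesses are sound: the constant-divergence operator on the maybe-monad validates \NAT, \COD and \UNI while refuting \FIX (because $\div$ is a two-sided zero for Kleisli composition there), and the greatest-fixpoint operator on $\PSet$ refutes \NAT while keeping \FIX (it is a fixpoint), \COD (dual Beki\'c on a complete lattice) and \UNI (precomposition with a pure $h$ preserves the top element and all meets, hence commutes with the transfinite iteration from the top). The genuine gap is that the \COD and \UNI witnesses --- which are the substance of the proposition --- are not constructed. For \UNI your proposed route (a Conway theory that is not an iteration theory, realised as a Kleisli category) has an unfilled hole of its own: the known separating Conway theories live over finite sets, and promoting one to a monad on a category with the needed coproducts, together with a designated class of pure morphisms against which \UNI is stated and fails while \FIX, \NAT, \COD survive, is exactly the work you are deferring.

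Moreover, your structural heuristic (``a witness keeping \FIX and \NAT must carry rich nonterminating behaviour'') is right for \COD but overshoots for \UNI. The paper's \UNI witness is elementary: take the exception monad $TX=X+\{0,1\}$, which carries two Elgot structures (one per choice of divergence element), and define $f^\istar$ by the first choice when the loop object $X$ is a singleton and by the second otherwise. Since \FIX, \NAT and \COD each relate iterations at a single loop object, they are insensitive to this object-dependent choice; only \UNI compares iteration at two different objects along a pure map, and it fails. For \COD your infinite-trace intuition is exactly the paper's: on $TX=\PSet(2^\star\times X\cup 2^\omega)$ one augments the canonical iteration of the underlying Elgot monad by the set of infinite words arising as limits of finite traces through the loop body; then for $g(\star)=\{\inl 0,\inr 1\}$ the flattened loop produces all of $2^\omega$ while the nested one produces only $\{0^\omega,1^\omega\}$. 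Until these two models are written down and their three surviving axioms actually checked, the proposal remains a plan rather than a proof.
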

\begin{proof}
For every axiom, we construct a separating example that fails that axiom, but 
satisfies the other three. Every example is a monad on $\Set$.
\begin{itemize}[wide]
  \item\FIX: For any monad $\BBT$, equipped with a natural transformation 
  $p\c 1\to TX$, we can define $f^\istar = [\eta,p\bang]\komp f$ for a
  given $f\c X\to {T(Y+X)}$. It is easy to see that~\NAT,~\COD and~\UNI are satisfied,
  but~\FIX need not to, e.g.\ with $\BBT$ being the non-deterministic writer monad
  (\autoref{exa:mon}.\ref{it:wri}) over the additive monoid of natural numbers~$\nat$.
  \item\NAT: Let $\BBT=\PSet$ and let $f^\istar(x) = Y$
  for every $f\c X\to T(Y+X)$ and every $x\in X$. Note that every $f\c X\to T(Y+Z)$
  is equivalent to a pair $(f_1\c X\to TY,f_2\c X\to TZ)$ and $[g,h]\komp f= g\komp f_1\cup h\komp f_2$
  for any $g\c Y\to TV$, $h\c Z\to TV$. This helps one to see that all the axioms, except \NAT hold true, e.g.\ 
  $([\eta,f^\istar]\komp f)(x) = f_1(x)\cup Y = Y = f^\istar(x)$.
  \NAT fails, because $g\komp\div = g\komp (\eta\inr)^\istar \neq ([\eta\inl\komp g, \eta\inr] \komp \eta\inr)^{\istar} = \div$,
  since the image of $\div\c X\to TY$ is~$\{Y\}$, while the image of $g\komp\div$,
  aka the image of $g$, need not be~$\{Y\}$. 
  \item\COD: Consider the exception monad transform $TX = {\PSet(2^\star\times X\cup 2^\omega)}$
  of the non-deterministic writer monad 
  over the free monoid~$2^\star$. This is canonically an Elgot monad, and let us 
  denote by $(\argument)^\iistar$ the corresponding iteration operator. Every
  $f\c X\to T(Y+X)$, using the isomorphism $T(Y+X)\iso\PSet(2^\star\times Y\cup 2^\omega)\times\PSet(2^\star\times X)$,
  induces a map $\hat f\c X\to \PSet(2^\star\times X)$. Let $f^\istar\c X\to TY$ 
  be as follows:~$f^\istar(x)$ is the union of $f^\iistar(x)$ and the set 
  \begin{align*}
  \{w\in 2^\omega\mid \exists u\in 2^\star.\,uw = w_1w_2\ldots,(w_1,x_1)\in\hat f(x), (w_2,x_2)\in\hat f(x_1),\ldots\}.
  \end{align*}
  That $(\argument)^\istar$ satisfies \FIX, \NAT
  and \UNI follows essentially from the fact that so does $(\argument)^\iistar$. To show 
  that $(\argument)^\istar$ fails \COD, consider $g\c 1\to \PSet((2^\star+2^\star)\cup 2^\omega)$,
  with $g(\star) = \{\inl 0,\inr 1\}$. Let $f$ be the composition of~$g$ with 
  the obvious isomorphism $\PSet((2^\star+2^\star)\cup 2^\omega)\iso T((0+1)+1)$.
  Now $([\eta,\eta\inr]\komp f)^\istar(\star) = 2^\omega\neq \{0^\omega,1^\omega\} 
  = f^{\istar\istar}(\star)$.
  \item\UNI: Consider the exception monad on $TX = X + \{0,1\}$. This can be
  made into an Elgot monad in two ways: by regarding either $0$ or $1$ as the divergence
  element. Given $f\c X\to T(Y+X)$, we let $f^\istar$ be computed as a least fixpoint 
  according to the first 
  choice if $X$ is a singleton and according to the second choice otherwise. The 
  axioms except \UNI are clearly satisfied. To show that \UNI fails, let $|X|>2$,
  $|Z|=1$, $g=\eta\inr$, $f=\eta\inr$, $h=\bang$. The premise of \UNI is thus satisfied, 
  while the conclusion is not, since $f^\istar$ is constantly $1$ and $g^\istar$
  is constantly $0$.
  \qed
  \end{itemize}
\noqed\end{proof}
Although we cannot lift any of the Elgot monad laws, \NAT can be significantly restricted. 
\begin{proposition}\label{pro:nat}
In the definition of Elgot monad, \NAT can be equivalently replaced by its instance
with $g$ of the form $\eta\inr\c Y\to T(Y'+Y)$.
\end{proposition}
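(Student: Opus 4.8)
The two directions are not symmetric: the displayed instance is literally a special case of \NAT, so only the converse needs an argument. Assume throughout \FIX, \COD and \UNI together with the restricted law
\[
\eta\inr\komp m^\istar = ([\eta\inl\inr, \eta\inr]\komp m)^\istar \qquad (m\c W\to T(V+W)),
\]
and derive the general law. Fix $g\c Y\to TZ$ and $f\c X\to T(Y+X)$ and abbreviate $\bar f = [\eta\inl\komp g, \eta\inr]\komp f\c X\to T(Z+X)$; the goal is $\bar f^\istar = g\komp f^\istar$. The guiding intuition is that $f^\istar$ leaves the loop through its $Y$-summand \emph{exactly once}, so ``applying $g$ at the exit transition'' ($\bar f^\istar$) ought to agree with ``applying $g$ after the loop'' ($g\komp f^\istar$); the work is to turn this into a calculation driven by the restricted law.

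First I would record the \emph{only} place where the hypothesis enters. Applying the restricted law with the injection $\eta\inr\c Y\to T(Z+Y)$ to $m:=f$ gives $\hat f^\istar = \eta\inr\komp f^\istar$, where $\hat f = [\eta\inl\inr, \eta\inr]\komp f\c X\to T((Z+Y)+X)$. Postcomposing the copairing $[\eta, g]\c Z+Y\to TZ$ and using $[\eta,g]\komp\eta\inr = g$ then yields
\[
g\komp f^\istar = [\eta, g]\komp\hat f^\istar,
\]
which expresses exactly that $g$ may be pulled \emph{out} of the loop once the exit value has been retagged by the pure injection $\inr$.

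Next I would realise the ``immediate'' iterate $\bar f^\istar$ as a double iterate and flatten it with \COD. Put $F = [\eta\inl\inl\komp g, \eta\inr]\komp f\c X\to T((Z+X)+X)$, so that $F$ runs $f$, loops on its $X$-summand, and on its $Y$-summand runs $g$ and exits into the left $Z$. A routine coproduct computation gives $[\eta,\eta\inr]\komp F = [\eta\inl\komp g,\eta\inr]\komp f = \bar f$, whence \COD yields $\bar f^\istar = ([\eta,\eta\inr]\komp F)^\istar = F^{\istar\istar}$. It then suffices to evaluate the inner iterate and show
\[
F^\istar = \eta\inl\komp(g\komp f^\istar):
\]
granting this, the map $\eta\inl\komp(g\komp f^\istar)$ exits on the left on its first step, so \FIX collapses its iterate via $(\eta\inl\komp w)^\istar = w$, giving $F^{\istar\istar} = g\komp f^\istar$ and hence $\bar f^\istar = g\komp f^\istar$.

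The crux, and the step I expect to be the main obstacle, is precisely the evaluation $F^\istar = \eta\inl\komp(g\komp f^\istar)$. Its difficulty is structural: the restricted law transports only a \emph{pure} injection across an iterate, whereas here a genuinely effectful $g$ sits on the loop's exit transition, and \UNI — which acts on the \emph{state}, never on the output — cannot move it; so the position of $g$ can only be shifted indirectly. The plan is to bridge the gap through the deferred form of the first step: rewrite the inner loop via $\hat f$, for which $\hat f^\istar = \eta\inr\komp f^\istar$ is already known, and reconcile the pure retagging $y\mapsto\inr y$ performed by $\hat f$ with the single application $y\mapsto g(y)$ performed by $F$ by a uniformity argument on an auxiliary state space, using \FIX to pin down the resulting fixpoint. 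This reconciliation — isolating the one application of $g$ from the iteration while only a pure injection is freely available — is the sole delicate computation; everything else is the coproduct bookkeeping recorded above.
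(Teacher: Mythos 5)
The reduction you set up is sound as far as it goes, but the proof is not complete: the step you yourself flag as ``the crux'' --- establishing $F^\istar = \eta\inl\komp(g\komp f^\istar)$ for $F = [\eta\inl\inl\komp g,\eta\inr]\komp f$ --- is exactly where the entire difficulty of the proposition lives, and your plan for it does not close the gap. Observe that $F = [\eta\inl\komp(\eta\inl\komp g),\eta\inr]\komp f$, so the identity you still owe is itself an instance of unrestricted \NAT, applied to the \emph{effectful} morphism $\eta\inl\komp g\c Y\to T(Z+X)$ rather than to $g$. Your reduction via \COD and \FIX has therefore replaced the original goal by an equally hard instance of the same shape; no progress has been made on the one genuinely nontrivial point, namely how to move an effectful morphism sitting on the exit transition of a loop to a post-composition when the hypothesis only lets you transport pure injections. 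The sketched remedy (``a uniformity argument on an auxiliary state space, using \FIX to pin down the resulting fixpoint'') names the right ingredients but is not a proof: \UNI requires a pure $h$ intertwining the two loop bodies, and $F$ and $\hat f$ have different exit types $(Z+X)$ versus $(Z+Y)$, so no direct uniformity square relates them.

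For comparison, the paper's proof makes the auxiliary state space explicit and does the reconciliation concretely: it forms $w = [\eta\inl\komp g,\eta\inr\komp f]\c Y+X\to T(Z+(Y+X))$, so that $g$ becomes part of the \emph{loop body} over the enlarged state $Y+X$ and the exit transformation is trivial; \FIX gives $w^\istar\komp\eta\inl = g$, and then $w^\istar\komp\eta\inr$ is evaluated in two ways, once as $g\komp f^\istar$ (using \COD, \FIX, \UNI with the pure injection $\inr\c X\to Y+X$, and the permitted instance of \NAT) and once as $([\eta\inl\komp g,\eta\inr]\komp f)^\istar$. Crucially, the second evaluation relies on the derived laws \SQR and \DIN (quoted from~\cite[Lemma~31]{GoncharovSchroder18} as consequences of \FIX, \COD and \UNI); dinaturality is precisely the device that lets an effectful component be rotated around the loop, and nothing in your argument plays that role. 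To repair your proof you would need either to import \DIN/\SQR and carry out the reconciliation, or to reproduce the $w$-construction --- at which point you would essentially be writing the paper's proof.
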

\begin{proof}
We use the fact that \DIN and the following law, called \axname{Squaring} are derivable
from \FIX, \COD and \UNI~\cite[Lemma~31]{GoncharovSchroder18}:

\medskip
\begin{center}\label{it:sqr}
  \includegraphics[scale=1]{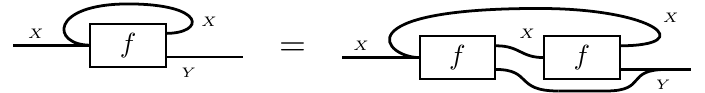}
\end{center}
Let us fix $g\c Y \to TZ$, $f\c X\to T(Y+X)$ and proceed to show that
$g\komp f^{\istar} = ([\eta\inl\komp g, \eta\inr] \komp f)^{\istar}$.
Let $w = [\eta\inl\komp g,\eta\inr\komp f]\c Y+X\to T(Z+(Y+X))$, and note that
\FIX entails 
\begin{align}\label{eq:nat1}
w^\istar\komp\eta\inl =&\; g
\end{align}
The goal will follow from the identities
\begin{align}
w^\istar\komp\eta\inr =&\; g\komp f^{\istar}\label{eq:nat2}\\
w^\istar\komp\eta\inr =&\; ([\eta\inl\komp g, \eta\inr] \komp f)^{\istar}\label{eq:nat3}
\end{align}
Let us show~\eqref{eq:nat2}, using an allowed instance of \NAT:
\begin{flalign*}
&&w^\istar\komp\eta\inr
  =\;& ([\eta,\eta\inr]\komp [\eta\inl\inl\komp g,T(\inr\inl+\inr)\comp f])^\istar\komp\eta\inr\\
&&=\;& ([\eta\inl\inl\komp g,T(\inr\inl+\inr)\comp f])^{\istar\istar}\komp\eta\inr&\by{\COD}\\
&&=\;& [\eta, w^\istar]\komp ([\eta\inl\inl\komp g,T(\inr\inl+\inr)\comp f])^{\istar}\komp\eta\inr&\by{\FIX}\\
&&=\;& [\eta, w^\istar]\komp (T(\inr\inl+\id)\comp f)^{\istar}\komp&\by{\UNI}\\
&&=\;& [\eta, w^\istar]\komp \eta\inr\inl\komp f^\istar&\by{\NAT}\\
&&=\;& w^\istar\komp\eta\inl\komp f^\istar\\
&&=\;& g\komp f^\istar.&\by{\eqref{eq:nat1}}
\intertext{Finally, let us show~\eqref{eq:nat3}:}
&&w^\istar\komp\inr
  =\;& [\eta,w^\istar]\komp w\komp\inr&\by{\FIX}\\
&&=\;& [\eta,w^\istar]\komp [\eta\inl\komp g,\eta\inr\komp f]\komp\inr\\
&&=\;& w^\istar\komp f\\
&&=\;& ([\eta\inl,w]\komp w)^\istar\komp f&\by{\SQR}\\
&&=\;& ([\eta\inl,[\eta\inl\komp g,\eta\inr\komp f]]\komp [\eta\inl\komp g,\eta\inr\komp f])^\istar\komp f&\\
&&=\;& [\eta,([\eta\inl,\eta\inr\komp f]\\&&&\qquad\komp [\eta\inl\komp g, [\eta\inl\komp g,\eta\inr]\komp f])^\istar]\komp\eta\inr\komp f&\\
&&=\;& ([\eta\inl,[\eta\inl\komp g, [\eta\inl\komp g,\eta\inr]\komp f]]\komp \eta\inr\komp f)^\istar&\by{\DIN}\\
&&=\;& ([\eta\inl\komp g, [\eta\inl\komp g,\eta\inr]\komp f]\komp f)^\istar&\\
&&=\;& ([\eta\inl, [\eta\inl\komp g,\eta\inr]\komp f]\komp [\eta\inl\komp g,\eta\inr]\komp f)^\istar&\\
&&=\;& ([\eta\inl\komp g,\eta\inr]\komp f)^\istar.&\by{\SQR}
\end{flalign*}
\end{proof}

\section{While-Monads}\label{sec:while}
We proceed to develop a novel alternative characterization of Elgot monads in more 
conventional for computer science terms of while-loops.
%
%
%
\begin{definition}[Decisions]
Given a monad $\BBT$ on~$\BC$, we call any family 
$(\Dec(X)\subseteq\BC(X,T(X+X)))_{X\in|\BC|}$, a family of \emph{decisions}
if every $\Dec(X)$ contains~$\eta\inl$, $\eta\inr$, and is closed 
under~$\oname{if}$-$\oname{then}$-$\oname{else}$.
\end{definition}
We encode logical operations on decisions as follows:
\begin{flalign*}
  &&\ff &\,= \eta\inl, & b\booland c =\,& \ift{b}{c}{\ff}, & \boolnot b =\,& \ift{b}{\ff}{\tt},\qquad\\
  &&\tt &\,= \eta\inr, & b\boolor c  =\,& \ift{b}{\tt}{c}.
\end{flalign*}
By definition, decisions can range from the smallest family with $\Dec(X) = 
\{\ff,\tt\}$, to the greatest one with $\Dec(X)=\BC(X,T(X+X))$.

\begin{remark}
Our notion of decision is maximally simple and general. An 
alternative are morphisms of the form $b\c X\to T2$, from which we can
obtain $X\xto{\brks{\id,b}} X\times T2\xto{\tau} T(X\times 2)\iso T(X+X)$ if $\BBT$
is strong, with~$\tau$ being the strength. The resulting 
decision $d$ would satisfy many properties we are not assuming generally, e.g.\
$\ift{d}{\tt}{\ff} = \eta$. Both morphisms 
of the form $X\to T2$ and $X\to T(X+X)$ are relevant in semantics as decision making
abstractions -- this is explained in detail from the perspective of categorical logic
by Jacobs~\cite{Jacobs16}, who uses the names \emph{predicates} and \emph{instruments}
correspondingly (alluding to physical, in particular, quantum experiments).
\end{remark}
Elgot monads are essentially the semantic gadgets for effectful while-languages. In fact, 
we can introduce a semantic $\oname{while}$-operator and express it via Elgot iteration.
Given $b\in\Dec(X)$ and $p\in\BC(X,TX)$, let 
\begin{align}\label{eq:while-as-it}
\while{b}{p} = (\ift{b}{p;\tt}{\ff})^\istar,
\end{align}
or diagrammatically, $\while{b}{p}$ is expressed as
\begin{center}
  \includegraphics[scale=1]{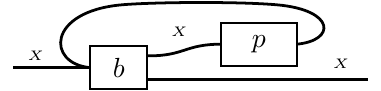}
\end{center}
%
%
%
It is much less obvious that, conversely, Elgot iteration can be defined via~$\oname{while}$,
and moreover that the entire class of Elgot monads can be rebased on~$\oname{while}$.
We dub the corresponding class of monad \emph{while-monads}.
\begin{definition}[While-Monad]
A \emph{while-monad} is a monad $\BBT$, equipped with an operator 
\begin{align*}
\oname{while}\c\Dec(X)\times\BC(X,TX)\to\BC(X,TX),
\end{align*}
such that the following axioms are satisfied
\begin{flalign}
\hspace{-5em}  &&\axname{W-Fix}    &\hspace{1em}\while{b}{p} = \ift{b}{p; (\while{b}{p})}{\eta}\tag*{}\label{it:wfix}\\*[1ex]
\hspace{-5em}  &&\axname{W-Or}     &\hspace{1em}\while{(b\boolor c)}{p} = (\while{b}{p});\while{c}{(p;\while{b}{p})}\tag*{}\label{it:wor}\\[2ex]
\hspace{-5em}  &&\axname{W-And}    &\hspace{.5em}\lrule{}{\eta h;b = \eta u;\ff}{
  \while{(b\booland (c\boolor\eta u;\ff))}{p} = \while{b}{(\ift{c}{p}{\eta h})}}\tag*{}\label{it:wand}\\*[1.5ex]
\hspace{-5em}  &&\axname{W-Uni}    &\hspace{.5em}\lrule{}{\eta h; b = \ift{c}{\eta h';\tt}{\eta u;\ff}\qquad \eta h';p = q;\eta h}
  {\eta h;\while{b}{p} = (\while{c}{q});\eta u}\tag*{}\label{it:wuni}
\end{flalign}
\end{definition}
The laws of while-monads roughly correspond to~\FIX, \COD, \NAT  and~\UNI. This 
correspondence is somewhat allusive for~\WAND,
which under $u=\id$ instantiates to the nicer looking
\begin{displaymath}
  \lrule{}{\eta h;b = \ff}{
  \while{(b\booland c)}{p} = \while{b}{(\ift{c}{p}{\eta h})}}
\end{displaymath}
However, this instance generally seems to be insufficient. Let us still consider 
it in more detail.
The while-loop $\while{(b\booland c)}{p}$ repeats $p$ as long as both~$b$ and $c$
are satisfied, and $\while{b}{(\ift{c}{p}{\eta h})}$ repeats $(\ift{c}{p}{\eta h})$
as long as $b$ is satisfied, but the latter program still checks $c$ before running $p$ 
and triggers $\eta h$ only if $c$ fails. The equality in the conclusion of the rule is thus due to 
the premise, which ensures that once $\eta h$ is triggered, the loop is exited 
at the beginning of the next iteration. 

Note that using the following equations 
\begin{align}
\dw{p}{b}       =&\; p;\while{b}{p}\label{eq:do-wh}\\*
\while{b}{p}    =&\; \ift{b}{(\dw{p}{b})}{\eta}\label{eq:wh-do}
\end{align}
we can define $(\dw{p}{b})$ from $(\while{b}{p})$ and conversely obtain the latter 
from the former. Unsurprisingly, while-monads can be equivalently defined in terms 
of~$\oname{do-while}$.
\begin{lemma}\label{lem:dowhile}
Giving a while-monad structure on $\BBT$ is equivalent to equipping~$\BBT$ with 
an operator, sending every $b\in\Dec(X)$ and every $p\in\BC(X,TX)$ to 
$(\dw{p}{b})\in\BC(X,TX)$, such that the following principles hold true: 
\begin{flalign}
\hspace{-2em}&&\axname{DW-Fix}  &\hspace{1em}\dw{p}{b} = p;\ift{b}{(\dw{p}{b})}{\eta}\tag*{}\label{it:dwfix}&\\[1ex]
\hspace{-2em}&&\axname{DW-Or}   &\hspace{1em}\dw{p}{(b\boolor c)} = \dw{(\dw{p}{b})}{c}\tag*{}\label{it:dwor}\\[2ex]
\hspace{-2em}&&\axname{DW-And}  &\hspace{.5em}\lrule{}{\eta h;b = \eta u;\ff}{
  \begin{array}{rl}\ift{c}{\dw{p&}{(b\booland (c\boolor\eta u;\ff))}}{\eta u}\\[1ex] &\quad = \dw{(\ift{c}{p}{\eta h})}{b}\end{array}
}\tag*{}\label{it:dwand}\\[1.5ex]
\hspace{-2em}&&\axname{DW-Uni}  &\hspace{.5em}\lrule{}{
  \eta h;p = q;\eta h'\qquad\eta h'; b = \ift{c}{\eta h;\tt}{\eta u;\ff} 
}{
  \eta h;\dw{p}{b} = (\dw{q}{c});\eta u
}\tag*{}\label{it:dwuni}
\end{flalign}
The relevant equivalence is witnessed by the equations~\eqref{eq:do-wh} and~\eqref{eq:wh-do}.
\end{lemma}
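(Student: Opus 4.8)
The plan is to read \eqref{eq:do-wh} and \eqref{eq:wh-do} as a pair of mutually inverse passages between a $\oname{while}$-operator and a $\oname{do-while}$-operator, and then to transport the axioms across this correspondence one name at a time, \WFIX with \DWFIX, \WOR with \DWOR, and so on. Concretely, I would first establish the bijection at the level of operators satisfying only the fixpoint laws, and afterwards verify that the forward passage \eqref{eq:do-wh} carries each W-law to the like-named DW-law, while the backward passage \eqref{eq:wh-do} does the reverse; since the passages are inverse, this yields the claimed equivalence.

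For the bijection, suppose $\oname{while}$ satisfies the four W-axioms and put $(\dw{p}{b}) \ass p;\while{b}{p}$ as in \eqref{eq:do-wh}. Feeding this into \eqref{eq:wh-do} gives $\ift{b}{(\dw{p}{b})}{\eta} = \ift{b}{p;\while{b}{p}}{\eta}$, which is exactly $\while{b}{p}$ by \WFIX; so the round trip is the identity. Dually, starting from a $\oname{do-while}$-operator obeying the DW-axioms and defining $\oname{while}$ by \eqref{eq:wh-do}, the composite $p;\ift{b}{(\dw{p}{b})}{\eta}$ collapses to $(\dw{p}{b})$ by \DWFIX. Thus \eqref{eq:do-wh} and \eqref{eq:wh-do} are inverse bijections between the $\oname{while}$-operators satisfying \WFIX and the $\oname{do-while}$-operators satisfying \DWFIX, and it remains only to match the three further laws.

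The fixpoint and unfolding pairs are routine reassociations. For instance, \DWFIX is obtained from \WFIX by precomposing with $p$ and rewriting under \eqref{eq:do-wh}; and \DWOR follows from \WOR by expanding $\dw{p}{(b\boolor c)} = p;(\while{b}{p});\while{c}{(p;\while{b}{p})}$ and reassociating the leading $p;\while{b}{p}$ into $(\dw{p}{b})$, so that the right-hand side reads off as $\dw{(\dw{p}{b})}{c}$. The reverse implications, from DW-laws back to W-laws, are equally elementary: one substitutes \eqref{eq:wh-do}, unfolds the leading test with the corresponding DW-law, and simplifies using the basic if-then-else identities for the encoded connectives $\booland$, $\boolor$. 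These verifications use nothing beyond associativity of $;$ and the definitions.

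The main obstacle will be the conditional pairs \WAND/\DWAND and \WUNI/\DWUNI, where the premises speak about pure morphisms and must be aligned exactly. The source of difficulty is that $\oname{while}$ tests its decision before executing the body whereas $\oname{do-while}$ executes first; consequently a leading pure map is pushed past the body and re-emerges on the other side, which is precisely why the two accompanying maps $h$ and $h'$ swap roles between \WUNI and \DWUNI (and why the conclusion of \DWAND acquires the outer $\ift{c}{\argument}{\eta u}$ absent from \WAND). For \DWUNI I would rewrite $\eta h;\dw{p}{b} = \eta h;p;\while{b}{p}$, use the body premise $\eta h;p = q;\eta h'$ to pull the pure part to the right, and then apply \WUNI to the residual $\eta h';\while{b}{p}$, whose hypotheses are supplied verbatim by the remaining DW-premise; reassociating $q;(\while{c}{q})$ into $(\dw{q}{c})$ finishes it. For \DWAND I would unfold one step, $\dw{(\ift{c}{p}{\eta h})}{b} = (\ift{c}{p}{\eta h});\while{b}{\ift{c}{p}{\eta h}}$, and split on $c$: when $c$ holds the body is $p$, and when $c$ fails the body is $\eta h$, whereupon the exit premise $\eta h;b = \eta u;\ff$ forces immediate termination with $\eta u$ — this is exactly what produces the outer conditional on the left of \DWAND. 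Confirming that, after these manipulations, the premises and the branch structure line up on the nose is where the real care lies; the conclusions then fall out by associativity together with \eqref{eq:do-wh} and \eqref{eq:wh-do}.
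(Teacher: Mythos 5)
Your proposal is correct and follows exactly the route the lemma statement itself prescribes: show that \eqref{eq:do-wh} and \eqref{eq:wh-do} are mutually inverse (using only \WFIX resp.\ \DWFIX for the round trips) and then transport each axiom to its like-named counterpart, with the two genuinely non-trivial observations --- that $h$ and $h'$ swap roles between \WUNI and \DWUNI, and that the exit premise $\eta h;b=\eta u;\ff$ is what generates the outer $\oname{if}$-$\oname{then}$-$\oname{else}$ in \DWAND --- both identified correctly. The paper omits the detailed verification, but your plan matches the intended argument and the deferred computations all go through by distributing post-composition over copairings.
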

Finally, we can prove the equivalence of while-monads and Elgot monads, under an 
expressivity assumption, stating that sets of decisions $\Dec$ are sufficiently 
non-trivial. Such an assumption is clearly necessary, for, as we indicated above,
the smallest family of decisions is the one with $\Dec(X) = 
\{\ff,\tt\}$, and it is not enough to express meaningful while-loops.
\begin{theorem}\label{thm:main}
Suppose that for all $X,Y\in |\BC|$, $\eta(\inl+\inr)\in\Dec(X+Y)$. Then~$\BBT$ is 
and Elgot monad iff it is a while-monad w.r.t.~$\Dec$. The equivalence is witnessed 
by mutual translations:~\eqref{eq:while-as-it} and
\begin{align}\label{eq:it-as-while}
f^\istar = \eta\inr; (\while{\eta(\inl+\inr)}{[\eta\inl,f]}); [\eta,\div].
\end{align}
Diagrammatically, $f^\istar$ is expressed as
\begin{center}
  \includegraphics[scale=1]{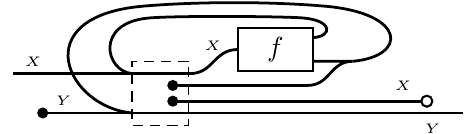}
\end{center}
for $f\c X\to T(Y+X)$.
\end{theorem}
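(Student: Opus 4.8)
The plan is to prove the theorem by exhibiting a bijection between Elgot-iteration operators and while-operators on the fixed monad $\BBT$, with the two directions given by the translations \eqref{eq:while-as-it} and \eqref{eq:it-as-while}. Concretely, I would verify four things: (i) if $(\argument)^\istar$ satisfies the Elgot laws, then $\oname{while}$ defined by \eqref{eq:while-as-it} satisfies \WFIX, \WOR, \WAND and \WUNI; (ii) if $\oname{while}$ satisfies the while-laws, then $(\argument)^\istar$ defined by \eqref{eq:it-as-while} satisfies \FIX, \NAT, \COD and \UNI; and (iii)--(iv) that the two assignments are mutually inverse. Throughout I would exploit the axiom-by-axiom correspondence flagged in the text, namely \WFIX to \FIX, \WOR to \COD, \WAND to \NAT and \WUNI to \UNI, so that each while-law is derived from (essentially) the single Elgot law it mirrors, and conversely. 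It is convenient to carry out the direction from while-monads to Elgot monads in the equivalent $\oname{do-while}$ presentation of \autoref{lem:dowhile}, since \eqref{eq:it-as-while} unfolds more symmetrically there.

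For direction (i), the template is the derivation of \WFIX. Writing $w = \while{b}{p} = (\ift{b}{p;\tt}{\ff})^\istar$, one unfolding by \FIX gives $w = [\eta,w]\komp(\ift{b}{p;\tt}{\ff})$, and the copairing on the right collapses — since $[\eta,w]\komp\ff = \eta$ and $[\eta,w]\komp(p;\tt) = p;w$ — to exactly $\ift{b}{p;w}{\eta}$, which is \WFIX. The remaining three while-laws follow the same recipe with more wires: \WOR from \COD (splitting the condition $b\boolor c$ corresponds to merging the two feedback loops of a codiagonal), \WAND from \NAT, and \WUNI from \UNI. In these I would also invoke the derived laws \DIN and \SQR, available by \autoref{rem:din} and \autoref{pro:nat}. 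All residual work is coproduct bookkeeping on $X+X$, which is routine.

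Direction (ii) is the substantive half. With $f\c X\to T(Y+X)$ I set $W = \while{\eta(\inl+\inr)}{[\eta\inl,f]}$ on the object $Y+X$, so that $f^\istar = [\eta,\div]\komp W\komp\eta\inr$; the decision $\eta(\inl+\inr)$ keeps iterating exactly while the state lies in the $X$-summand and exits the moment a $Y$-value is produced, and $[\eta,\div]$ returns that value while sending any never-exiting run to $\div$. I would then derive \FIX from \WFIX by one unfolding, \NAT from \WAND, and \UNI from \WUNI, where uniformity is precisely what lets one re-index the enlarged state space $Y+X$ back to $X$. The main obstacle is \COD: collapsing a doubly iterated $f^{\istar\istar}$ into a single iteration must be matched, on the while side, with the loop-decomposition law \WOR (equivalently \DWOR), and the matching requires tracking three exit/continue branches at once and checking that non-terminating runs are assigned $\div$ consistently in both the nested and the collapsed form. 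This divergence bookkeeping, together with the fact (noted in the text) that the \WAND-to-\NAT correspondence is the least transparent, is where I expect the real difficulty to sit.

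Finally, for the two round-trips I would substitute each formula into the other and simplify. Starting from a while-operator, defining $(\argument)^\istar$ by \eqref{eq:it-as-while} and then re-deriving $\oname{while}$ by \eqref{eq:while-as-it}: expanding \eqref{eq:it-as-while} for the special argument $\ift{b}{p;\tt}{\ff}$ and collapsing by \WFIX and \WUNI should return $\while{b}{p}$. Conversely, starting from an Elgot operator, defining $\oname{while}$ and re-deriving $(\argument)^\istar$: expanding $W$ by \eqref{eq:while-as-it} inside \eqref{eq:it-as-while} and simplifying by \FIX, \NAT and \UNI should return $f^\istar$. In both round-trips the one genuinely load-bearing step is an application of uniformity (\UNI, respectively \WUNI) that re-bases the computation from the padded state space $Y+X$ to the original $X$, after which the injections and $\div$ cancel by the fixpoint laws.
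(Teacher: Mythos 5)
Your overall architecture coincides with the paper's: the same two translations, the same four verification tasks (two round trips plus the two law-transfer directions), and the same use of the $\oname{do\dash while}$ presentation of \autoref{lem:dowhile} to make \eqref{eq:it-as-while} unfold symmetrically. The one step you actually carry out --- \WFIX from \FIX by a single unfolding and collapsing the copair --- is correct and is exactly the paper's computation.

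The genuine gap is in the routing you commit to for the law transfer, which is where all the work sits. You propose to derive each law from ``the single Elgot law it mirrors,'' and in particular \NAT from \WAND. This would fail: \WAND only speaks about \emph{pure} morphisms $\eta h$, $\eta u$ in its premise, whereas \NAT quantifies over an arbitrary Kleisli morphism $g\c Y\to TZ$, and no combination of the while-laws applied to \eqref{eq:it-as-while} reaches such a $g$ directly. The paper has to first invoke \autoref{pro:nat} (itself a nontrivial derivation using \DIN and \SQR) to reduce \NAT to its instance with pure $g$, and then the derivation goes through \WUNI, not \WAND. The actual role of \WAND is elsewhere: it is the load-bearing step in the round trip $\oname{while}\to(\argument)^\istar\to\oname{while}$, where one must rewrite the compound condition $\eta(\inl+\inr)\booland(\cdots)$ --- a step your sketch of the round trips (which attributes everything to uniformity) does not account for. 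More broadly, the clean one-to-one correspondence you lean on does not survive contact with the proof: \FIX already needs \WUNI to establish $\eta\inl;(\while{\eta(\inl+\inr)}{[\eta\inl,f]})=\eta\inl$ before the unfolding can be collapsed; \COD needs \DWUNI twice (to commute the retraction $[\eta,\eta\inr]$ past the loops) in addition to \DWOR; and conversely \DWOR is derived from \COD \emph{together with} \NAT, and \DWAND from \DIN. So uniformity is pervasive rather than confined to the round trips, and without recognizing this --- and without the reduction of \NAT to its pure instance --- the plan as stated cannot be completed.
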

\begin{proof}Note that~\eqref{eq:it-as-while} is equivalent to 
\begin{align}\label{eq:it-as-dowhile}
f^\istar = \eta\inr; \bigl(\dw{[\eta\inl,f]}{\eta(\inl+\inr)}\bigr); [\eta,\div].
\end{align}
First, we show that the indicated translations are mutually inverse.
\medskip\noindent
(i)~  $(\argument)^\istar\to\oname{while}\to (\argument)^\istar$: We need to show 
  that 
  \begin{displaymath}
    \eta\inr; (\ift{(\inl+\inr)}{[\eta\inl,f]; \eta\inr}{\eta\inl})^\istar; [\eta,\div] = f^\istar.
  \end{displaymath}
Indeed,
\begin{flalign*}
&& \eta\inr;&\; (\ift{(\inl+\inr)}{[\eta\inl,f]; \eta\inr}{\eta\inl})^\istar; [\eta,\div] \\*
&& &\;= [\eta,\div]\komp ([\eta\inl,\eta\inr\komp[\eta\inl,f]]\comp (\inl+\inr))^\istar\komp\eta\inr &\\
&& &\;= [\eta,\div]\komp [\eta\inl\inl,\eta\inr\komp f]^\istar\komp\eta\inr &\by{\NAT}\\
&& &\;= ([\eta\inl\komp[\eta,\div],\eta\inr]\komp [\eta\inl\inl,\eta\inr\komp f])^\istar\komp\eta\inr &\\
&& &\;= ([\eta\inl,\eta\inr\komp f])^\istar\komp\eta\inr &\\
&& &\;= ([\eta\inl,\eta\inr\komp f]\komp[\eta\inl,\eta\inr])^\istar\komp\eta\inr &\\
&& &\;= [\eta, ([\eta\inl,[\eta\inl,\eta\inr]]\komp \eta\inr\komp f)^\istar]\komp [\eta\inl,\eta\inr]\komp\eta\inr&\by{\DIN}\\
&& &\;= f^\istar.
\end{flalign*}
\medskip\noindent
(ii)~  $\oname{while}\to(\argument)^\istar\to\oname{while}$: We need to show that
  \begin{align}\label{eq:wh-it-wh}
    \eta\inr; \bigl(\while{\eta(\inl+\inr)}{[\eta\inl,\ift{b}{p;\tt}{\ff}]}\bigr); [\eta,\div] = \while{b}{p}.
  \end{align}
Observe that 
\begin{align*}
[\eta\inl,\ift{&b}{p;\tt}{\ff}] \\
=&\; [[\ff,\ff],[\ff,p;\tt]]\comp [\eta\inl;\ff, \ift{b}{\eta\inr;\tt}{\eta\inl;\ff}]\\
=&\; \ift{[\eta\inl;\ff, \ift{b}{\eta\inr;\tt}{\eta\inl;\ff}]}{[\ff,p;\tt]}{[\ff,\ff]},\\[2ex]
[\ff,\ff];&\, \eta(\inl+\inr)= \eta [\inl,\inl];\ff,
\intertext{and}
\eta(\inl+&\inr)\booland ([\eta\inl;\ff, \ift{b}{\eta\inr;\tt}{\eta\inl;\ff}]\boolor\eta [\inl,\inl];\ff)\\
=&\;[\eta\inl;\ff,\eta\inr;\tt]\booland [\eta\inl;\ff, \ift{b}{\eta\inr;\tt}{\eta\inl;\ff}]\\
=&\;[\eta\inl;\ff, \ift{b}{\eta\inr;\tt}{\eta\inl;\ff}]
\intertext{Hence, using~\WAND,}
\while{\eta&(\inl+\inr)}{[\eta\inl,\ift{b}{p;\tt}{\ff}]}\\
=\;& \while{[\eta\inl;\ff, \ift{b}{\eta\inr;\tt}{\eta\inl;\ff}]}{[\ff,p;\tt]}.
\intertext{Now,}
\eta\inr; [&\eta\inl;\ff, \ift{b}{\eta\inr;\tt}{\eta\inl;\ff}]\\
=\;& \ift{b}{\eta\inr;\tt}{\eta\inl;\ff},\\*[2ex]
\eta\inr; [&\ff,p;\tt] = p;\eta\inr.
\intertext{Hence, using~\WUNI,}
\eta\inr;(\while{[&\eta\inl;\ff, \ift{b}{\eta\inr;\tt}{\eta\inl;\ff}]}{[\ff,p;\tt]}); [\eta,\div]\\*
=\;&(\while{b}{p});\eta\inl; [\eta,\div]\\*
=\;&\while{b}{p},
\end{align*}
and we are done with the proof of~\eqref{eq:wh-it-wh}.

\medskip\noindent
(iii)~  Let us check that the laws of Elgot monads follow from the laws of while-monads
under the encoding~\eqref{eq:it-as-while}.

\medskip\noindent
\FIX:~ Suppose $f\c X\to T(Y+X)$ and let $p = \while{\eta(\inl+\inr)}{[\eta\inl,f]}$. 
Note that
\begin{align}\label{eq:fp-wh}
\eta\inl;p = \eta\inl,
\end{align}
which follows by applying~\WUNI to the identities $\eta\inl; \eta(\inl+\inr) = \ift{\ff}{\eta\inl;\tt}{\eta\inl;\ff}$
and $\eta\inl;[\eta\inl,f] = \eta\inl$ and noting that $\while{\ff}{(\eta\inl)} = \eta$.
Then
\begin{flalign*}
&& f^\istar &\;= \eta\inr; p; [\eta,\div] &\\
&&  &\;= \eta\inr; (\ift{(\inl+\inr)}{[\eta\inl,f]; p}{\eta}); [\eta,\div]&\by{\WFIX}\\
&&  &\;= [\eta,\div]\komp [\eta,p\komp [\eta\inl,f]] (\inl+\inr)\komp\eta\inr\\
&&  &\;= [\eta,\div]\komp p\komp f\\
&&  &\;= [\eta,\div]\komp [p\komp\eta\inl,p\komp\eta\inr]\komp f\\
&&  &\;= [\eta,\div]\komp [\eta\inl, p\komp \eta\inr]\komp f&\by{\eqref{eq:fp-wh}}\\
&&  &\;= [\eta, [\eta,\div]\komp p\komp \eta\inr]\komp f\\
&&  &\;= f;[\eta,\eta\inr; p; [\eta,\div]]\\
&&  &\;= [\eta,f^\istar] \komp f.
\end{flalign*}

\medskip\noindent
\NAT:~ By \autoref{pro:nat}, it suffices to show that 
\begin{align*}
  \eta h\komp f^{\istar} = (\eta(h+\id)\komp f)^{\istar} 
\end{align*}
for all $f\c X\to T(Y+X)$ and $h\c Y\to TZ$. Note that
\begin{align*}
\eta (h+\id); \eta(\inl+\inr) =&\; \ift{\eta(\inl+\inr)}{\eta (h+\id);\tt}{\eta (h+\id);\ff},\\
\eta (h+\id); [\eta\inl,\eta(h+\id) \komp f] =&\; [\eta\inl,f]; \eta (h+\id),
\end{align*}
hence, by \WUNI,
\begin{align*}
\eta (h+\id);&\while{\eta(\inl+\inr)}{[\eta\inl,\eta(h+\id) \komp f]}\\
 =\, (\!&\while{\eta(\inl+\inr)}{[\eta\inl,f]}); \eta (h+\id).
\end{align*}
Therefore, 
Then
\begin{flalign*}
&& \eta h\komp f^\istar 
    &\;= \eta\inr; (\while{\eta(\inl+\inr)}{[\eta\inl,f]}); [\eta,\div]; \eta h &\\*
&&  &\;= \eta\inr; (\while{\eta(\inl+\inr)}{[\eta\inl,f]}); \eta (h+\id); [\eta,\div]&\\
&&  &\;= \eta\inr;\eta (h+\id);(\while{\eta(\inl+\inr)}{[\eta\inl,\eta(h+\id) \komp f]}); [\eta,\div]\\
&&  &\;= (\eta(h+\id)\komp f)^{\istar},
\end{flalign*}
and we are done.

\medskip\noindent
\COD:~ Let $f\c X\to T((Y+X)+X)$. We will work with the translation~\eqref{eq:it-as-dowhile}. Let $Z=(Y+X)+X$, 
fix the following morphisms:
\begin{align*}
 p =&\; [[\eta\inl\inl,f],f]\c Z\to TZ,\\
 b =&\; \eta(\inl+\inr)\c Z\to T(Z+ Z),\\
 c =&\; [\eta(\inl\inl+\inl\inr),\eta\inr\inr]\c Z\to T(Z+Z),\\
 d =&\; [\eta,\eta\inr]\c Z\to T(Y+X),
\end{align*}
observe that $b\boolor c = c$ and that
\begin{align}
d;[\eta\inl,f;d] =&\; p;d,\notag\\
d;\eta(\inl+\inr) =&\; \ift{c}{d;\tt}{d;\ff},\label{eq:cod-dw0}
\end{align}
which by~\DWUNI entails
\begin{align}\label{eq:cod-dw1}
d; \dw{[\eta\inl,f;d]}{\eta(\inl+\inr)} = (\dw{p}{c}); d.
\end{align}
%
%
It is easy to see by~\DWUNI that 
\begin{align}\label{eq:cod-dw2}
f^\istar=\eta\inr; (\dw{[\eta\inl,f]}{\eta(\inl+\inr)}); d.
\end{align}%
Next, observe that $p=[\eta(\inl+\id),\eta\inr];[\eta\inl,f]$, and hence 
\begin{flalign*}
\quad (\dw{p}{b}); d
=&\; [\eta(\inl+\id),\eta\inr];(\dw{[\eta\inl,f]}{b}); d          &\by{\DWUNI}\\
=&\; [[\eta\inl\inl,\eta\inr],\eta\inr];(\dw{[\eta\inl,f]}{b}); d\\
=&\; [[\eta\inl,f^\istar],f^\istar]&\by{\eqref{eq:cod-dw2}}\\
=&\; d; [\eta\inl,f^\istar],
\end{flalign*}
which, together with~\eqref{eq:cod-dw0} by~\DWUNI yields
\begin{align}\label{eq:cod-dw4}
d; (\dw{[\eta\inl,f^\istar]}{\eta(\inl+\inr)}) = (\dw{(\dw{p}{b})}{c});  d
\end{align}
Finally, we can proceed with the proof of~\COD:
\begin{flalign*}
&&([\eta,\eta\inr]&\komp f)^\istar\\ 
&&  =&\; \eta\inr; (\dw{[\eta\inl,f\komp d]}{\eta(\inl+\inr)}); [\eta,\div]\\
&&  =&\; \eta\inr; d; (\dw{[\eta\inl,d\komp f]}{\eta(\inl+\inr)}); [\eta,\div]\\
&&  =&\; \eta\inr;(\dw{p}{c}); d; [\eta,\div]&\by{\eqref{eq:cod-dw1}}\\
&&  =&\; \eta\inr;(\dw{p}{(b\boolor c)}); d; [\eta,\div]\\
&&  =&\; \eta\inr;(\dw{(\dw{p}{b})}{c});  d; [\eta,\div]&\by{\DWOR}\\
&&  =&\; \eta\inr;d; (\dw{[\eta\inl,f^\istar]}{\eta(\inl+\inr)}); [\eta,\div]&\by{\eqref{eq:cod-dw4}}\\
&&  =&\; \eta\inr; (\dw{[\eta\inl,f^\istar]}{\eta(\inl+\inr)}); [\eta,\div]\\*
&&  =&\; f^{\istar\istar}.
\end{flalign*}

\medskip\noindent
\UNI:~ Suppose that $g \komp \eta h = \eta(\id+h)\komp \comp f$. This entails
\begin{align*}
\eta(\id+h);[\eta\inl,g]
=&\; [\eta\inl,g\komp \eta h]\\
=&\; [\eta\inl, [\eta\inl,\eta\inr\komp\eta h]\komp \comp f]\\
=&\; [\eta\inl,f];\eta(\id+h), 
\intertext{and note that}
\eta(\id+h);\eta(\inl+&\inr)\\
=&\; \eta(\inl+\inr h)\\
=&\; \eta(\inl+\inr);\eta((\id+h)+(\id+h))\\
=&\; \ift{(\inl+\inr)}{\eta(h+\id);\tt}{\eta(h+\id);\ff}.&&
\intertext{Therefore, using~\WUNI,}
\eta h; g^\istar 
=&\; \eta h; \eta\inr; (\while{\eta(\inl+\inr)}{[\eta\inl,g]}); [\eta,\div]\\*
=&\; \eta\inr h; (\while{\eta(\inl+\inr)}{[\eta\inl,g]}); [\eta,\div]\\
=&\; \eta\inr; \eta(\id+h); (\while{\eta(\inl+\inr)}{[\eta\inl,g]}); [\eta,\div]\\
=&\; \eta\inr; (\while{\eta(\inl+\inr)}{[\eta\inl,f]}); [\eta,\div]\\
=&\; f^\istar.
\end{align*}
\noindent
(iv)~ Finally, we check that the laws of while-monads follow from those of Elgot monads.
To that end, we verify the properties, listed in~\autoref{lem:dowhile}, which is 
equivalent. First of all, note that by~\eqref{eq:do-wh} and~\eqref{eq:while-as-it},
\begin{flalign*}
&&\dw{p}{b} =&\; p;\while{b}{p}                         &\by{\eqref{eq:do-wh}}\\
&&=&\; p;(\ift{b}{p;\ff}{\ff})^\istar                   &\by{\eqref{eq:while-as-it}}\\
&&=&\; ([\eta\inl,\eta\inr\komp p]\komp b)^\istar\komp p\\
&&=&\; [\eta, ([\eta\inl,\eta\inr\komp p]\komp b)^\istar]\komp \eta\inr\komp p\\
&&=&\; ([\eta\inl,b]\komp \eta\inr\komp p)^\istar&\by{\DIN}\\*
&&=&\; (b\komp p)^\istar.
\end{flalign*}
We will use the resulting encoding expansion of $\dw{p}{b}$ throughout. 
\begin{flalign*}
\intertext{\DWFIX:}
  \quad\dw{p}{b}
  =&\; (b\komp p)^\istar\\ 
  =&\; [\eta,(b\komp p)^\istar]\komp b\komp p&\by{\FIX}\\
  =&\; [\eta,\dw{p}{b}]\komp b\komp p\\ 
  =&\; p;\ift{b}{(\dw{p}{b})}{\eta}.
\intertext{\DWOR:} 
  \dw{p}{(b&\boolor c)}\\* 
  =&\; ((b\boolor c)\komp p)^\istar\\
  =&\; ([c,\eta\inr]\komp b\komp p)^{\istar}\\ 
  =&\; ([\eta,\eta\inr]\komp[\eta\inl\komp c,\eta\inr]\komp b\komp p)^{\istar}\\ 
  =&\; ([\eta\inl\komp c,\eta\inr]\komp b\komp p)^{\istar\istar}&\by{\COD}\\
  =&\; (c\komp (b\komp p)^\istar)^\istar&\by{\NAT}\\
  =&\; \dw{(\dw{p}{b})}{c}.
\intertext{\DWAND: Assuming that $\eta h;b = \eta u;\ff$,} 
\ift{c}{\dw{&p}{(b\booland (c\boolor\eta u;\ff))}}{\eta u}\\
=&\; [\eta u, ((b\booland (c\boolor\eta u;\ff))\komp p)^\istar]\komp c\\
=&\; [\eta u, ([\eta\inl,T(u+\id)\komp c]\komp b\komp p)^\istar]\komp c\\
=&\;  [\eta, ([\eta\inl, T(u+\id)\komp c]\komp b\komp p)^\istar]\komp T(u+\id)\komp c\\ 
=&\;  ([\eta\inl,b\komp p]\komp T(u+\id)\komp c)^\istar&\by{\DIN}\\
=&\;  ([\eta\inl\comp u,b\komp p]\komp c)^\istar\\ 
=&\;  ([b\komp\eta h,b\komp p]\komp c)^\istar&\by{assumption}\\
=&\;  (b\komp [\eta h,p]\komp c)^\istar\\ 
=&\;  (b\komp (\ift{c}{p}{\eta h}))^\istar\\ 
=&\; \dw{(\ift{c}{p}{\eta h})}{b}.
\intertext{\DWUNI: Assuming that $\eta h;p = q;\eta h'$ and $\eta h'; b = \ift{c}{\eta h;\tt}{\eta u;\ff}$, } 
\eta h;\dw{p&}{b}\\
=&\; (b\komp p)^\istar\komp\eta h \\
=&\; ([\eta\inl u,\eta\inr]\komp c\komp q)^\istar&\by{\UNI}\\
=&\; \eta\inl u\komp(c\komp q)^\istar&\by{\NAT}\\
=&\; (\dw{q}{c});\eta u.  
\end{flalign*}
This concludes the proof.
\end{proof}

\section{Kleene Monads as Elgot Monads}\label{sec:kleene-elgot}
If hom-sets of the Kleisli category of a while-monad $\BBT$ are equipped with a 
semilattice structure and every $\Dec(X)$ is closed under that structure, we can 
define Kleene iteration as follows:
\begin{align*}
p^\rstar = \while{(\ff\lor\tt)}{p}.
\end{align*}
That is, at each iteration we non-deterministically decide to finish or to continue.

Given a decision $b\in\Dec(X)$, let $b? = (\ift{b}{\eta}{\div})\in\BC(X,TX)$.
The standard way to express while-loops via Kleene iteration is as follows:
\begin{align*}
\while{b}{p} = (b?; p)^\rstar; (\boolnot b)? 
\end{align*}
If the composite translation $\oname{while}\to (\argument)^\rstar\to\oname{while}$
was a provable identity, this would essentially mean equivalence of Kleene iteration
and $\oname{while}$ with non-determinism. This is generally not true,
unless we postulate more properties that connect $\oname{while}$ and nondeterminism.
We leave for future work the problem of establishing a minimal set of such laws.
Here, we only establish the equivalence for the case when the induced Kleene iteration 
satisfies Kleene monad laws.
To start off, we note an alternative to \UNI, obtained by replacing 
the reference to pure morphisms with the reference to a larger class consisting of those~$h$, for which 
$\div\komp h = \div$. We need this preparatory step to relate Elgot iteration and Kleene iteration, since the 
latter does not hinge on a postulated class of pure morphisms, while the former
does. 
\begin{definition}[Strong Uniformity]
Given an Elgot monad $\BBT$, the \emph{strong uniformity} law is as follows: 
\begin{flalign}
\label{it:suni}\tag*{}\axname{Uniformity$^\bigstar$}:\qquad   \lrule{}{\div\komp h =\div \qquad g \komp  h = [\eta\inl,\eta\inr\komp h]\komp f}{g^{\istar} \komp h = f^{\istar}} &&
\end{flalign}
where $h\c X\to TZ$, $g\c Z\to T(Y + Z)$, and $f\c X \to T(Y + X)$.
\end{definition}
Clearly, \UNI is an instance of \SUN.
\begin{example}\label{exa:suni-trans}
An example of Elgot monad that fails \SUN can be constructed as 
follows. Let $\BBS$ be the reader monad transform of the maybe-monad on $\Set$: 
$SX = (X+1)^2$, which is an Elgot monad, since the maybe-monad is so and Elgotness 
is preserved by the reader monad transformer. Let $TX=X\times (X+1) + 1$ and note that $T$ is a retract of 
$S$ under 
\begin{align*}
\rho\c (X+1)\times (X+1)\iso X\times (X+1) + (X+1)\xto{\id+\bang} X\times (X+1) + 1.
\end{align*}
%
It is easy to check that $\rho$ is a congruence w.r.t.\ the Elgot monad structure, 
and it thus induces an Elgot monad structure on $\BBT$~\cite[Theorem~20]{GoncharovSchroderEtAl17}.

Now, let $T_{E} = T(\argument+E)$ for some non-empty $E$. The Elgot monad structure of 
$\BBT$ induces an Elgot monad structure on $\BBT_E$. However, $\BBT_E$
fails \SUN. Indeed, let $h\c X\to (X+E)\times ((X+E)+1) + 1$ and 
$f\c X\to ((1+X)+E)\times (((1+X)+E)+1) + 1$ be as follows:
\begin{align*}
h(x) = \inl(\inl x,\inl\inr e)&&
f(x) = \inl (\inr e, \inl\inl\inr x)
\end{align*}
where $e\in E$. Then $h\komp\div=\div$, $f^\istar(x) = \inl (\inr e, \inr\star)$, and 
$f\komp h = [\eta\inl,\eta\inr\komp h]\komp f$, but $(f^\istar\komp h)(x)=\inl(\inr e,\inl\inr e)\neq \inl (\inr e, \inr\star) = f^\istar(x)$. 
\end{example}
\begin{remark}
\autoref{exa:suni-trans} indicates that it is hard to come up with a
general and robust notion of Elgot iteration, which would confine to a single 
category, without referring to another category of ``well-behaved'' (e.g.\ pure)
morphisms. While the class of Elgot monads is closed under various 
monad transformers, the example shows that Elgot monads with strong uniformity 
are not even closed under the exception monad transformer.
\end{remark}
We are in a position to relate Kleene monads and Elgot monads.
\begin{theorem}\label{thm:elg-klee}
A monad $\BBT$ is a Kleene monad iff 
\begin{enumerate}
  \item $\BBT$ is an Elgot monad;
  \item the Kleisli category of $\BBT$ is enriched over join-semilattices 
  (without least elements) and join-preserving morphisms;
  \item $\BBT$ satisfies $(\eta\inl\lor\eta\inr)^\istar = \eta$;
  \item $\BBT$ satisfies \SUN.
\end{enumerate}
\end{theorem}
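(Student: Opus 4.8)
The plan is to route both implications through the equational characterization of \autoref{pro:klee-eq}, working throughout with the two translations $p^\rstar = \while{\ff\lor\tt}{p} = (\eta\inl\lor\eta\inr\komp p)^\istar$ and $\div = (\eta\inr)^\istar$, together with the matrix encoding of Elgot iteration $f^\istar = [\eta,\div]\komp[\eta\inl,f]^\rstar\komp\eta\inr$.

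For the \emph{forward} direction (Kleene monad $\Rightarrow$ 1--4), condition~2 is immediate, since bounded join-semilattices are join-semilattices and strict join-preserving maps are join-preserving. For condition~1 I would take the matrix formula above as the definition of $(\argument)^\istar$ (with $\div=\bot$) and read off the four Elgot laws from the Kleene-star laws of \autoref{pro:klee-eq}: unfolding \itref{it:klee-eq1} gives \FIX, the fusion law \itref{it:klee-eq3} applied to $[\eta\inl,g]$ and $[\eta\inl,f]$ gives \NAT and \UNI, and \COD reduces to a block-triangular star computation. Condition~3 is then exactly $\eta^\rstar=\eta$, i.e.\ \itref{it:klee-eq2}. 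For condition~4, the premise $\div\komp h=\bot\komp h=\bot=\div$ holds for \emph{all} $h$ by right-annihilation, and the conclusion follows from fusion \itref{it:klee-eq3}, because the hypothesis $g\komp h=[\eta\inl,\eta\inr\komp h]\komp f$ of \SUN is precisely the commutation $[\eta\inl,g]\komp[\eta\inl,\eta\inr\komp h]=[\eta\inl,\eta\inr\komp h]\komp[\eta\inl,f]$.

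For the \emph{reverse} direction (1--4 $\Rightarrow$ Kleene monad) I would verify the hypotheses of \autoref{pro:klee-eq} for $p^\rstar=(\eta\inl\lor\eta\inr\komp p)^\istar$. Two derived identities drive everything. First, \NAT applied to $\eta\inl\lor\eta\inr$ together with condition~3 (which says $(\eta\inl\lor\eta\inr)^\istar=\eta$) yields $(\eta\inl\komp g\lor\eta\inr)^\istar=g$ for all $g$; feeding this in as an inner iterate and applying \COD gives the idle-absorption identity $(f\lor\eta\inr)^\istar=f^\istar$, since $[\eta,\eta\inr]\komp(\eta\inl\komp f\lor\eta\inr)=f\lor\eta\inr$. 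Idle-absorption immediately yields law \itref{it:klee-eq4}: $(p\lor\eta)^\rstar=(\eta\inl\lor\eta\inr\komp p\lor\eta\inr)^\istar=(\eta\inl\lor\eta\inr\komp p)^\istar=p^\rstar$. Law \itref{it:klee-eq1} falls out of \FIX and distributivity, and law \itref{it:klee-eq2} is condition~3. For fusion \itref{it:klee-eq3}, given $h\komp f=f\komp g$ I would first rewrite $f\komp g^\rstar=(\eta\inl\komp f\lor\eta\inr\komp g)^\istar$ by \NAT, and then invoke \SUN with outer loop $\eta\inl\lor\eta\inr\komp h$, connecting morphism $f$, and inner loop $\eta\inl\komp f\lor\eta\inr\komp g$; its side condition unwinds to exactly $h\komp f=f\komp g$, delivering $h^\rstar\komp f=f\komp g^\rstar$.

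The hard part is upgrading the enrichment of condition~2 to the \emph{bounded, strict} enrichment required by \autoref{pro:klee-eq}, i.e.\ exhibiting $\div$ as a two-sided strict least element. Left-strictness $g\komp\div=\div$ is free from \NAT (take the iterated map to be $\eta\inr$), but the remaining facts --- bottom-ness $f\lor\div=f$ and right-strictness $\div\komp h=\div$ --- encode precisely the ``leastness'' that Elgot iteration does not provide, and I expect them to be the main obstacle. They cannot come from \SUN alone, whose very premise $\div\komp h=\div$ \emph{is} right-strictness; they must instead be extracted from condition~3 and the semilattice structure. The promising route is the identity $\div\komp h=(\div\komp h\lor\eta\inr)^\istar=(\div\komp h)^\istar$ (from \NAT and idle-absorption), which reduces right-strictness to showing that a loop body diverging after side effects iterates to pure divergence; dually, bottom-ness reduces to the additivity instance $(\eta\inl\komp f\lor\eta\inr)^\istar=f\lor\div$. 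These are ``commutative''-type identities that go beyond the Conway and uniformity laws, and \autoref{exa:suni-trans} shows \SUN is indispensable for them. Once bottom-ness and right-strictness are secured, \SUN applies to arbitrary connecting morphisms, fusion \itref{it:klee-eq3} closes as above, and \autoref{pro:klee-eq} certifies $\BC_\BBT$ as Kleene-Kozen, so that $\BBT$ is a Kleene monad.
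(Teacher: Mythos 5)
Your overall architecture coincides with the paper's: both directions are routed through the axiomatization of \autoref{pro:klee-eq}, with $p^\rstar=(\eta\inl\lor\eta\inr\komp p)^\istar$ in one direction and a C{\u{a}}z{\u{a}}nescu--\cedilla{S}tef{\u{a}}nescu-style encoding of $(\argument)^\istar$ in the other (the paper uses $f^\istar=([\eta,\div]\komp f)\komp([\div,\eta]\komp f)^\rstar$ rather than your matrix form, but these are interchangeable), and the forward direction, the unfolding law, $\eta^\rstar=\eta$, the absorption law $(p\lor\eta)^\rstar=p^\rstar$ via \COD{} and condition~3, and fusion via \SUN{} all match the paper's computations in outline.

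The genuine gap is exactly where you place it: you never actually establish that $\div$ is a two\dash{}sided strict least element, i.e.\ $f\lor\div=f$ and $\div\komp h=\div$, and without these \autoref{pro:klee-eq} does not apply. Worse, your diagnosis of how to close the gap points in the wrong direction. You assert that these facts ``cannot come from \SUN{} alone, whose very premise $\div\komp h=\div$ is right-strictness,'' and propose instead to extract them from condition~3 and the semilattice structure via unproven ``commutative-type'' identities such as $(\eta\inl\komp f\lor\eta\inr)^\istar=f\lor\div$. The paper's proof shows that \SUN{} is precisely the tool: the point is to apply it not to an arbitrary $h$ but to specific morphisms whose admissibility can be checked independently. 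Concretely, for $f\lor\div=f$ one first reduces to $\eta\lor\div=\eta$ (since $f\komp(\eta\lor\div)=f\lor f\komp\div=f\lor\div$ by bilinearity and the \NAT-derived identity $f\komp\div=\div$), then applies \SUN{} with $h=\eta\lor\div$: its side condition $\div\komp(\eta\lor\div)=\div\komp\eta\lor\div\komp\div=\div$ follows from bilinearity and $f\komp\div=\div$ again, its commutation premise $(\eta\inl\lor\eta\inr)\komp(\eta\lor\div)=[\eta\inl,\eta\inr\komp(\eta\lor\div)]\komp(\eta\inl\lor\eta\inr)$ is a direct computation, and the conclusion combined with condition~3 gives $\eta\lor\div=(\eta\inl\lor\eta\inr)^\istar\komp(\eta\lor\div)=(\eta\inl\lor\eta\inr)^\istar=\eta$. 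Strictness $\div\komp f=\div$ is likewise obtained from \SUN{} applied to $g=\eta\inr$ using the trivial commutation $\eta\inr\komp f=[\eta\inl,\eta\inr\komp f]\komp\eta\inr$. So the missing lemmas are provable from the stated hypotheses, but your proposal neither proves them nor identifies the correct mechanism. A minor further omission: the paper also verifies that the two translations between $(\argument)^\istar$ and $(\argument)^\rstar$ are mutually inverse, which your proposal does not address (this is not strictly needed for the bare ``iff'' but is part of what the theorem is meant to convey).
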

To prove the theorem, we need to mutually encode Kleene iteration and Elgot 
iteration. These encodings go back to C{\u{a}}z{\v{a}}nescu and \cedilla{S}tef{\v{a}}nescu~\cite{CazanescuStefanescu94}.
Some preparatory steps are needed.
The following is a standard property of Kleene algebra, which 
carries over to Kleene monads straightforwardly.
\begin{lemma}\label{lem:star-sum}
$(f\lor g)^\rstar = f^\rstar\komp (g\komp f^\rstar)^\rstar$.
\end{lemma}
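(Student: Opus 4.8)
The plan is to prove the two inequalities $(f\lor g)^\rstar \leq f^\rstar\komp(g\komp f^\rstar)^\rstar$ and $f^\rstar\komp(g\komp f^\rstar)^\rstar \leq (f\lor g)^\rstar$ separately, in each case exhibiting the target side as a pre-fixpoint of the defining map of the relevant star and invoking the least-(pre-)fixpoint characterisation from the definition of Kleene-Kozen category (equivalently, the unfolding and induction laws of \autoref{fig:klee-ax}). Throughout I would freely use that $\komp$ is monotone in both arguments and distributes over $\lor$ --- both immediate from the semilattice enrichment and the distributivity laws --- together with the dual unfolding $f^\rstar = \id\lor f\komp f^\rstar$ already derived in the proof of \autoref{pro:klee-eq}.

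For the ``$\leq$'' direction I would set $x = f^\rstar\komp(g\komp f^\rstar)^\rstar$ and show that $x$ is a pre-fixpoint of $y\mapsto\id\lor(f\lor g)\komp y$; since $(f\lor g)^\rstar$ is the least such, the inequality follows. Unfolding the outer star as $f^\rstar = \id\lor f\komp f^\rstar$ gives $x = (g\komp f^\rstar)^\rstar\lor f\komp x$, which yields both $\id\leq x$ and $f\komp x\leq x$; unfolding $(g\komp f^\rstar)^\rstar = \id\lor(g\komp f^\rstar)\komp(g\komp f^\rstar)^\rstar$ gives $g\komp x = (g\komp f^\rstar)\komp(g\komp f^\rstar)^\rstar\leq x$. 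Distributivity then combines these into $\id\lor(f\lor g)\komp x = \id\lor f\komp x\lor g\komp x\leq x$, as required.

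For the ``$\geq$'' direction I would write $k = (f\lor g)^\rstar$ and show that $k$ is a pre-fixpoint of $w\mapsto f^\rstar\lor w\komp(g\komp f^\rstar)$, whose least fixpoint is exactly $f^\rstar\komp(g\komp f^\rstar)^\rstar$; the least-fixpoint property then delivers the inequality. The two facts to check are $f^\rstar\leq k$ and $k\komp(g\komp f^\rstar)\leq k$, and both reduce to the single observation $(f\lor g)\komp k\leq k$ and its dual $k\komp(f\lor g)\leq k$ (from the unfoldings of $k$). Indeed, from $f\leq f\lor g$ one gets $f\komp k\leq k$ and hence $\id\lor f\komp k\leq k$, so $f^\rstar\leq k$ by induction; and from $k\komp f\leq k$ one gets $k\komp f^\rstar\leq k$ by induction, whence $k\komp g\komp f^\rstar\leq k\komp f^\rstar\leq k$ using $g\leq f\lor g$.

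The argument is entirely routine --- this is the classical denesting identity transported to the monadic setting --- so no genuine obstacle arises. The only real care needed is bookkeeping: keeping the composition-order convention straight (recall $f\komp g = f^\klstar\comp g$, so the induction rules of \autoref{fig:klee-ax}, stated diagrammatically, must be read through the diagrammatic-to-functional translation), and recording the handful of derived facts --- monotonicity of $\komp$, and the pre-fixpoint readings of the two induction rules --- which are not axioms but follow immediately from the enrichment and from \autoref{pro:klee-eq}.
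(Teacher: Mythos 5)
Your argument is correct: both pre-fixpoint computations check out against the two least-(pre-)fixpoint characterisations postulated in the definition of a Kleene-Kozen category, and the bookkeeping with the composition order is handled properly. The paper itself gives no proof of this lemma, dismissing it as ``a standard property of Kleene algebra, which carries over to Kleene monads straightforwardly''; what you have written is precisely the standard denesting argument being appealed to, so there is nothing to compare beyond noting that you have filled in the omitted details correctly.
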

Next, observe the following.
\begin{lemma}\label{lem:klee_lem}
For any monad $\BBT$, whose Kleisli category is enriched over join-semilattices 
and join-preserving morphisms, $[f_1,g_1]\lor [f_2,g_2] = [f_1\lor f_2,g_1\lor g_2]$
where $f_1,f_2\c X\to TZ$, $g_1,g_2\c Y\to TZ$.
\end{lemma}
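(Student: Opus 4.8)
The plan is to argue directly from the universal property of the coproduct $X+Y$ in the Kleisli category $\BC_\BBT$, combined with the hypothesis that Kleisli composition preserves binary joins. Recall that in $\BC_\BBT$ the coproduct injections are the pure morphisms $\eta\inl\c X\to T(X+Y)$ and $\eta\inr\c Y\to T(X+Y)$, and the copairing $[f,g]\c X+Y\to TZ$ is the unique morphism satisfying $[f,g]\komp\eta\inl = f$ and $[f,g]\komp\eta\inr = g$. Consequently, two Kleisli morphisms $X+Y\to TZ$ coincide as soon as they agree after precomposition with both $\eta\inl$ and $\eta\inr$; this is the only reduction principle the proof needs.

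First I would precompose the left-hand side with $\eta\inl$. Since the enrichment over join-semilattices makes composition join-preserving in each argument, precomposition by the fixed morphism $\eta\inl$ distributes over $\lor$, so that
\begin{align*}
([f_1,g_1]\lor [f_2,g_2])\komp\eta\inl
  = [f_1,g_1]\komp\eta\inl \lor [f_2,g_2]\komp\eta\inl
  = f_1\lor f_2.
\end{align*}
On the other hand, the defining property of copairing gives $[f_1\lor f_2,\, g_1\lor g_2]\komp\eta\inl = f_1\lor f_2$ immediately, so the two sides agree on $\eta\inl$. The computation for $\eta\inr$ is entirely symmetric, yielding $g_1\lor g_2$ on both sides. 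Invoking the universal property then forces the two copairings to be equal, which is the claim.

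There is essentially no obstacle here: the content is the single observation that the enrichment hypothesis supplies join-preservation of precomposition by $\eta\inl$ and $\eta\inr$, and everything else is the universal property of coproducts in $\BC_\BBT$. The only point requiring a word of care is to confirm that it is precomposition (right composition) rather than postcomposition that is being used, and that the \emph{enriched} composition indeed preserves joins in the variable being varied; both follow directly from the stated enrichment over join-semilattices and join-preserving morphisms, so no least-element assumption is needed and the argument applies verbatim in the semilattice-without-bottom setting of \autoref{thm:elg-klee}.
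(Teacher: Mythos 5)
Your argument is correct and coincides with the paper's own proof: both reduce the identity to its precompositions with the coproduct injections, use join-preservation of Kleisli composition in the relevant argument to compute $([f_1,g_1]\lor[f_2,g_2])\komp\eta\inl = f_1\lor f_2$ (and symmetrically for $\inr$), and conclude by the universal property of the coproduct. No gaps.
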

\begin{proof}
The goal is entailed by the equations 
\begin{flalign*}
([f_1,g_1]\lor [f_2,g_2])\comp\inl =&\; [f_1\lor f_2,g_1\lor g_2]\comp\inl,\\
([f_1,g_1]\lor [f_2,g_2])\comp\inr =&\; [f_1\lor f_2,g_1\lor g_2]\comp\inr,
\intertext{
of which, we prove the first one. Indeed,} 
([f_1,g_1]\lor [f_2,g_2])\comp\inl 
=&\; ([f_1,g_1]\lor [f_2,g_2])\komp\eta\inl \\
=&\; [f_1,g_1]\komp\eta\inl\lor [f_2,g_2]\komp\eta\inl \\
=&\; f_1\lor f_2\\
=&\; [f_1\lor f_2,g_1\lor g_2]\comp\inl
\end{flalign*}
The second equation is shown analogously.
\end{proof}

\begin{proof}[of~\autoref{thm:elg-klee}]
%
%
We modify the claim slightly by replacing Clause~\itref{it:klee-eq2}. with the stronger
\begin{enumerate}[wide]
  \item[2$'$.] The Kleisli category of $\BBT$ is enriched over bounded 
  join-semilattices and strict join-preserving morphisms, and $\div=(\eta\inr)^\istar\c X\to TY$
  is the least element of $\BC(X,TY)$.
\end{enumerate}
Let us show that~\itref{it:klee-eq1}.--\itref{it:klee-eq3}. entail 2$'$. 
\begin{itemize}[wide]
  \item \emph{Right strictness of Kleisli composition}: $f\komp\div =\div$. Using naturality,
  $f\komp\div = f\komp (\eta\inr)^{\istar} = ([\eta\inl\komp f, \eta\inr] \komp\eta\inr)^{\istar}= (\eta\inr)^{\istar} =\div$.
  \item \emph{Left strictness of Kleisli composition}: $\div\komp f =\div$. Since
  $\eta\inr \komp f = {[\eta\inl,\eta\inr\komp f]} \komp \eta\inr$, by strong uniformity,
  $\div\komp f = (\eta\inr)^{\istar}\komp f = (\eta\inr)^{\istar} =\div$.
  \item \emph{$\div$ is the least element}, equivalently, $f\lor\div = f$ for all suitably 
  typed $f$. It suffices to consider the special case $f=\eta$, for then 
  $f\lor\div = f\komp (\eta\lor\div) = f\komp\eta = f$ for a general $f$. 
  
  Note that $(\eta\inl\lor \eta\inr) \komp (\eta\lor\div) = {[\eta\inl,\eta\inr\komp (\eta\lor\div)]} \komp (\eta\inl\lor \eta\inr)$,
  which by~\itref{it:klee-eq4}. and~\itref{it:klee-eq3}. entails $\eta\lor\div = (\eta\inl\lor \eta\inr)^\istar \komp (\eta\lor\div) = (\eta\inl\lor \eta\inr)^\istar = \eta$.
\end{itemize}

Now, given~$(\argument)^\istar$ of an Elgot monad, whose Kleisli category is enriched 
over join-semilattices, let 
\begin{displaymath}
  (f\c X\to TX)^\rstar = \bigl(\eta\inl\lor \eta\inr\komp f\c X\to T(X+X)\bigr)^\istar. 
\end{displaymath}
Conversely, given $(\argument)^\rstar$ 
of a Kleene monad, let 
\begin{displaymath}
  (f\c X\to T(Y+X))^\istar = ([\eta,\div]\komp f)\komp\bigl([\div,\eta]\komp f\c X\to TX\bigr)^\rstar.
\end{displaymath} 
We are left to 
check that these transformations are mutually inverse and that the expected properties
of defined operators are satisfied.

\medskip\noindent
(i)~ $(\argument)^\istar\to (\argument)^\rstar\to (\argument)^\istar$: Given 
  $f\c X\to T(Y+X)$, we need to show that
  \begin{align*}
  ([\eta,\div]\komp f)\komp (\eta\inl\lor \eta\inr\komp [\div,\eta]\komp f)^\istar = f^\istar.
  \end{align*}
  Indeed,
  \begin{flalign*}
  &&([\eta,\div]\komp& f)\komp (\eta\inl\lor \eta\inr\komp [\div,\eta]\komp f)^\istar\\
  &&=&\; \bigl([\eta\inl\komp [\eta,\div]\komp f, \eta\inr]^{\klstar} \comp (\eta\inl\lor \eta\inr\komp [\div,\eta]\komp f)\bigr)^\istar&\by{\NAT}\\
  &&=&\; ([\eta\inl,\div]\komp f\lor [\div,\eta\inr]\komp f)^\istar\\
  &&=&\; ([\eta\inl\lor\div,\div\lor\eta\inr]\komp f)^\istar&\by{\autoref{lem:klee_lem}}\\
  &&=&\; ([\eta\inl,\eta\inr]\komp f)^\istar\\
  &&=&\; f^\istar.
  \end{flalign*}
\medskip\noindent
(ii)~ $(\argument)^\rstar\to (\argument)^\istar\to (\argument)^\rstar$: Given 
  $f\c X\to TX$, we need to show that
  \begin{align*}
  ([\eta,\div]\komp (\eta\inl\lor \eta\inr\komp f))\komp([\div,\eta]\komp (\eta\inl\lor \eta\inr\komp f))^\rstar = f^\rstar.
  \end{align*}
  Indeed, $[\eta,\div]\komp (\eta\inl\lor \eta\inr\komp f) = \eta\lor\div = \eta$,
  and $[\div,\eta]\komp (\eta\inl\lor \eta\inr\komp f) = {\div\lor\eta\komp f= f}$,
  and therefore the right-hand side reduces to $\eta\komp f^\rstar = f^\rstar$.

Next, we show that from an Elgot monad we obtain a Kleene monad and back.
\medskip\noindent
(iii)~ \emph{From Elgot to Kleene}: We verify conditions from~\autoref{pro:klee-eq}.
  Since enrichment in semilattices is assumed, it suffices to check properties~\itref{it:klee-eq1}.--\itref{it:klee-eq3}.
  \begin{enumerate}[wide]
    \item We have
  %
  \begin{flalign*}
  && \eta\lor f^\rstar\komp f
      &\;= [\eta,f^\rstar]\komp (\eta\inl\lor \eta\inr\komp f)&\\
  &&  &\;= [\eta,(\eta\inl\lor \eta\inr\komp f)^\istar]\komp (\eta\inl\lor \eta\inr\komp f)&\\
  &&  &\;= (\eta\inl\lor\eta\inr\komp f)^\istar&\by{\FIX}\\
  &&  &\;= f^\rstar.
  \end{flalign*}
    \item $\eta^\rstar = (\eta\inl\lor\eta\inr)^\istar = \eta$ by the 
    global assumption of the theorem.
    \item We prove a stronger property $(f\lor\eta)^\rstar = f^\rstar$:
    \begin{flalign*}
    && (f\lor\eta)^\rstar &\;= (\eta\inl\lor\eta\inr\komp f\lor \eta\inr)^\istar &\\
    &&  &\;= (T[\id,\inr]\comp(\eta\inl\inl\lor\eta\inl\inr\komp f\lor\eta\inr))^\istar \\
    &&  &\;= (\eta\inl\inl\lor\eta\inl\inr\komp f\lor\eta\inr)^{\istar\istar} &\by{\COD}\\
    &&  &\;= ([\eta\inl\komp(\eta\inl\lor\eta\inr\komp f),\eta\inr]\komp(\eta\inl\lor\eta\inr))^{\istar\istar} \\
    &&  &\;= ((\eta\inl\lor\eta\inr\komp f)\komp (\eta\inl\lor\eta\inr)^{\istar})^{\istar} &\by{\NAT}\\
    &&  &\;= (\eta\inl\lor\eta\inr\komp f)^{\istar} &\by{assumption} \\
    &&  &\;= f^\rstar.
    \end{flalign*}
    \item Suppose that $f\komp h = g\komp f$. Then $(\eta\inl\lor\eta\inr\komp g)\komp f = \eta\inl\komp f\lor\eta\inr f\komp h= [\eta\inl,\eta\inr\komp f]\komp(\eta\inl\komp f\lor\eta\inr \komp h)$, which entails $(\eta\inl\lor\eta\inr\komp g)^\istar\komp f = (\eta\inl\komp f\lor\eta\inr \komp h)^\istar$,
    by strong uniformity.
    \begin{flalign*}
    && f\komp h^\rstar &\;= f\komp (\eta\inl\lor\eta\inr\komp h)^{\istar} &\\
    &&  &\;= (\eta\inl\komp f\lor\eta\inr\komp h)^{\istar}&\by{\NAT}\\
    &&  &\;= (\eta\inl\lor\eta\inr\komp g)^{\istar}\komp f&\by{\SUN}\\
    &&  &\;= g^\rstar\komp f
    \end{flalign*}
    
  \end{enumerate} 
\medskip\noindent
(iv)~ \emph{From Kleene to Elgot}: We need to verify the axioms of Elgot monads,
with \UNI replaced by \SUN.
%

  \medskip\noindent\FIX~: Given $f\c X\to T(Y+X)$,
  \begin{flalign*}
  && [\eta, f^{\istar}]\komp \comp f 
  &\;=  ([\eta\lor\div,\div\lor f^\istar]\komp f) \\*
  &&  &\;= [\eta,\div]\komp f\lor [\div,f^\istar]\komp f &\by{\autoref{lem:klee_lem}}\\
  &&  &\;= [\eta,\div]\komp f\lor f^\istar\komp [\div,\eta]\komp f \\
  &&  &\;= ([\eta,\div]\komp f)\komp (\eta\lor ([\div,\eta]\komp f)^\rstar\komp [\div,\eta]\komp f) \\
  &&  &\;= ([\eta,\div]\komp f)\komp ([\div,\eta]\komp f)^\rstar \\
  &&  &\;= f^{\istar}.
  \end{flalign*}
  \noindent\NAT:~ Given $g\c Y \to TZ$, $f\c X\to T(Y+X)$,
  \begin{flalign*}
  && g\komp f^{\istar} &\;= g\komp ([\eta,\div]\komp f)\komp ([\div,\eta]\komp f)^\rstar  &\\*
  &&  &\;= [g, \div] \komp f\komp ([\div,\eta]\komp f)^\rstar\\  
  &&  &\;= [\eta,\div]\komp [\eta\inl\komp g, \eta\inr] \komp f\komp ([\div,\eta]\komp [\eta\inl\komp g, \eta\inr] \komp f)^\rstar\\  
  &&  &\;= ([\eta\inl\komp g, \eta\inr] \komp f)^{\istar}.
  \end{flalign*}
  \noindent\COD:~ Let $f\c X \to T((Y + X) + X)$ and show $(T[\id,\inr] \comp f)^{\istar} = f^{\istar\istar}$. 
  Let $g = [[\div,\eta],\div]\komp f\c X\to TX$, $h=[\div,\eta]\komp \comp f\c X\to TX$. Then
  \begin{flalign*}
  && (T[\id,\inr] \comp f)^{\istar} &\;= ([\eta,\div]\komp T[\id,\inr] \comp f)\komp ([\div,\eta]\komp T[\id,\inr] \comp f)^\rstar &\\*
  &&  &\;=  [[\eta,\div],\div]\komp f\komp ([[\div,\eta],\eta]\komp \comp f)^\rstar \\
  &&  &\;=  [[\eta,\div],\div]\komp f\komp ([\div,\eta]\komp \comp f\lor [[\div,\eta],\div]\komp \comp f)^\rstar \\
  &&  &\;=  [[\eta,\div],\div]\komp f\komp (h\lor g)^\rstar \\
  &&  &\;=  [[\eta,\div],\div]\komp f\komp h^\rstar\komp (g\komp h^\rstar)^\rstar&\by{\autoref{lem:star-sum}} \\
  &&  &\;=  [\eta,\div]\komp [\eta,\div]\komp f\komp h^{\rstar}\komp ([\div,\eta]\komp [\eta,\div]\komp f\komp h^{\rstar})^\rstar\\
  &&  &\;=  ([\eta,\div]\komp f^{\istar})\komp ([\div,\eta]\komp f^{\istar})^\rstar\\
  &&  &\;=  f^{\istar\istar}.
  \end{flalign*}
  \noindent\SUN:~ Assume $f \komp h = [\eta\inl,\eta\inr\komp h] \komp g$. This entails 
  $([\div,\eta]\komp f) \komp h = ([\div,\eta]\komp [\eta\inl,\eta\inr\komp h]) \comp g = [\div, h]\komp g = h\komp [\div,\eta]\komp g$.
  By Lemma~\ref{pro:klee-eq}, 
  \begin{align*}
  ([\div,\eta]\komp f)^\rstar \komp h = h\komp ([\div,\eta]\komp g)^\rstar.
  \end{align*}
  We now have
  \begin{flalign*}
  && f^{\istar}\komp h &\;= ([\eta,\div]\komp f)\komp ([\div,\eta]\komp f)^\rstar\komp h  &\\
  &&  &\;= [\eta,\div]\komp f\komp h\komp ([\div,\eta]\komp g)^\rstar\\  
  &&  &\;= [\eta,\div]\komp [\eta\inl,\eta\inr\komp h] \komp g \komp ([\div,\eta]\komp g)^\rstar\\  
  &&  &\;= ([\eta,\div]\komp g)\komp ([\div,\eta]\komp g)^\rstar\\
  &&  &\;= g^{\istar}. 
  \end{flalign*}
This concludes the proof.
\end{proof}

In presence of assumptions~\itref{it:klee-eq1}.--\itref{it:klee-eq4}., the distinction 
between \UNI and \SUN becomes very subtle. 
%
\begin{example}[Filter Monad]
There is an Elgot monad $\BBT$, whose Kleisli category is enriched over bounded semilattices,
$(\eta\inl\lor\eta\inr)^\istar = \eta$, but $\BBT$ fails strong uniformity.
We prove it by adapting Kozen's separating example for left-handed and right-handed Kleene 
algebras~\cite[Proposition 7]{Kozen90}.

Recall that the \emph{filter monad}~\cite{Day75} sends every $X$ to the set of all filters on $X$,
equivalently to those maps $h\c (X\to 2)\to 2$, which preserve $\top$ and $\land$:
$h(\top) = \top$, $h(f\land g) = h(f)\land h(g)$ where $\top$ and $\land$ on $X\to 2$
are computed pointwise. For us, it will be more convenient to use the equivalent formulation,
obtained by flipping the order on $2$ (so, the resulting monad $\BBT$ could be actually 
called the \emph{ideal monad}). Every $TX$ is then the set of those  
$h\c\PSet X\to 2$, for which
\begin{align*}
f(\emptyset) = \bot, && f(s\cup t) = f(s)\lor f(t).
\end{align*}
\begin{enumerate}[wide]
  \item Note that Kleisli category $\Set_{\BBT}$ is dually isomorphic to a category $\BC$,
  for which every $\BC(X,Y)$ consists of functions $\PSet X\to\PSet Y$, preserving finite joins
  (in particular, monotone).
  This category has finite products: $\PSet\iobj$ is the terminal object and $\PSet X\times\PSet Y = \PSet(X+Y)$,
  by definition.
  \item Under this dual isomorphism, every morphism $f\c X\to T(Y+X)$ corresponds to a morphism $\hat f\c\PSet Y\times\PSet X\to\PSet X$
  in $\BC$ where we compute a fixpoint $\PSet Y\to\PSet X$ using the \emph{Knaster-Tarski theorem}, and transfer
  it back to $\BC$ as $f^\istar\c X\to TY$. 
  \item The construction of $f^\istar$ entails
  both $(\eta\inl\lor\eta\inr)^\istar \leq \eta$ and $(\eta\inl\lor\eta\inr)^\istar \geq \eta$,
  hence $(\eta\inl\lor\eta\inr)^\istar = \eta$.
  \item Enrichment in semilattices is obvious in view of the dual isomorphism of $\Set_{\BBT}$
  and~$\BC$.
  \item The \FIX law follows by construction. The remaining Elgot monad laws follow by \emph{transfinite induction}. 
  \item If $\BBT$ was a Kleene monad, any $\BC(X,X)$ would be a Kleene algebra,
  but Kozen showed that it is not, hence $\BBT$ is not a Kleene monad.
  \item By \autoref{thm:elg-klee}, $\BBT$ fails \SUN.
\end{enumerate}
\end{example}


\section{Conclusions}\label{sec:conc}
When it comes to modelling and semantics, many issues can be framed and treated
in terms of universal algebra and coalgebra. However, certain phenomena, such 
as recursion, partiality, extensionality, require additional structures, often 
imported from the theory of complete partial orders, by enriching categories and 
functors, and devising suitable structures, such as recursion and more specifically 
iteration. In many settings though, iteration is sufficient, and can be treated as a self-contained
ingredient whose properties matter, while a particular construction behind it 
does not. 
From this perspective, Elgot monads present a base fundamental 
building block in semantics. 

We formally compared Elgot monads with Kleene monads, which are 
a modest generalization of Kleene algebras. In contrast to inherently categorical 
Elgot monads, Kleene algebra is a simple notion, couched in traditional algebraic terms. 
The price of this simplicity is a tight pack of laws, which must be accepted altogether, but 
which are well-known to be conflicting with many models of iteration. We proposed a novel notion of \emph{while-monad},
which in the categorical context are essentially equivalent to Elgot monads,
and yet while-monads are morally a three-sorted algebra over (Boolean) 
decisions, programs and certain well-behaved programs (figuring in the 
so-called uniformity principle). This is somewhat similar to the extension 
of Kleene algebra with \emph{tests}~\cite{KozenSmith96}. The resulting \emph{Kleene algebra 
with tests} is two-sorted, with tests being a subsort of programs, 
and forming a Boolean algebra. Our \emph{decisions} unlike tests do not form 
a subsort of programs, but they do support operations of Boolean algebra, without 
however complying with all the Boolean algebra laws.  
We have then related Elgot monads (and while-monads)
with Kleene monads, and as a side-effect produced a novel axiomatization of Kleene 
algebra (\autoref{pro:klee-eq}), based on a version of the uniformity principle. We regard the present 
work as a step towards bringing the gap between Elgot iteration and Kleene iteration,
not only in technical sense, but also in the sense of concrete usage scenarios.
We plan to further explore algebraic axiomatizations of iteration, based on the 
current axiomatization of while-monads.

\bibliographystyle{plain}
\bibliography{monads}

\end{document}